\documentclass[conference]{IEEEtran}
\usepackage{fancyhdr} 
\usepackage[utf8]{inputenc}
\usepackage{cite}

\usepackage{floatrow}

\usepackage{comment}
\usepackage[font=small]{caption}
\usepackage{times}
\usepackage{epsfig}
\usepackage{graphicx, subfig}
\usepackage{amsmath}
\usepackage{amssymb}
\usepackage{amsthm}
\usepackage{url}
\usepackage{float}
\usepackage{xcolor}
\usepackage{algorithm,algpseudocode}
\usepackage{listings}
\usepackage{pifont}
\usepackage{multirow}
%
%

\usepackage{tikz}
\newtheorem{theorem}{Theorem}[]
\newcommand*\circled[1]{\tikz[baseline=(char.base)]{
            \node[shape=circle,fill,inner sep=1pt] (char) {\footnotesize \textcolor{white}{#1}};}}

\newcommand{\gsag}{\textsc{c-saw}}
\newcommand{\ipdps}{Zeng et al.}
\lstset{escapeinside={<@}{@>}}
\definecolor{err}{RGB}{249,159,159}

\def\HiLi{\leavevmode\rlap{\hbox to \hsize{\color{err}\leaders\hrule height .8\baselineskip depth .5ex\hfill}}}

\newcommand{\new}[1]{{\color{black}{#1}}}
\newcommand{\cready}[1]{{\color{black}{#1}}}

\newcommand{\hang}[1]{\textcolor{red}{[\small Hang: ~#1~]}}

\pagestyle{empty}

\begin{document}

\date{}

\title{\gsag: A Framework for Graph Sampling and Random Walk on GPUs}

\author{Santosh Pandey\hspace{.5in}Lingda Li$^+$\hspace{.5in}Adolfy Hoisie$^+$\hspace{.5in}Xiaoye S. Li$^*$\hspace{.5in}Hang Liu\\
Stevens Institute of Technology\hspace{.1in}$^+$Brookhaven National Laboratory\hspace{.1in}$^*$Lawrence Berkeley National Laboratory\vspace{-.5in}}

\maketitle
\thispagestyle{fancy}
\lhead{}
\rhead{}
\chead{}
\lfoot{\footnotesize{
SC20, November 9-19, 2020, Is Everywhere We Are
\newline 978-1-7281-9998-6/20/\$31.00 \copyright 2020 IEEE}}
\rfoot{}
\cfoot{}
\renewcommand{\headrulewidth}{0pt}
\renewcommand{\footrulewidth}{0pt}

\providecommand{\keywords}[1]
{
  \small	
  \textbf{\MakeUppercase{Keywords---}} #1
}

\begin{abstract}


Many applications require to learn, mine, analyze and visualize large-scale graphs.
These graphs are often too large to be addressed efficiently using conventional graph processing technologies.
Fortunately, recent research efforts find out {\em graph sampling} and {\em random walk}, which significantly reduce the size of original graphs, can benefit the tasks of learning, mining, {analyzing} and visualizing large graphs by capturing the desirable graph properties.
This paper introduces {\gsag}, the first framework that a\underline{c}celerates \underline{S}ampling and R\underline{a}ndom \underline{W}alk framework on GPUs. Particularly, {\gsag} makes three contributions:
First, our framework provides a generic API which allows users to implement a wide range of sampling and random walk algorithms with ease.
Second, offloading this framework on GPU, we introduce warp-centric parallel selection, and two novel optimizations for collision migration.
Third, towards supporting graphs that exceed the GPU memory capacity, we introduce efficient data transfer optimizations for out-of-memory and multi-GPU sampling, such as {workload-aware} scheduling and batched {multi-instance} sampling.
Taken together, our framework constantly outperforms the state of the art projects in addition to the capability of supporting a wide range of sampling and random walk algorithms. 
\end{abstract} \hspace{10pt}


\section{Introduction}

Graph is a natural format to represent relationships that are prevalent in a wide range of real-world applications, such as, material/drug discovery \cite{you2018graph}, web-structure~\cite{kumar2002web}, social network~\cite{garton1997studying}, protein-protein interaction~\cite{von2002comparative}, knowledge graphs~\cite{popping2003knowledge}, among many others.
Learning, mining, analyzing and visualizing graphs is hence of paramount value to \cready{our society}. 
However, as the size of the graph continues to grow, the complexity of handling those graphs also soars. In fact, large-scale graph analytics is deemed as a grand challenge that draws a great deal of attention. Popular evidences are Graph 500~\cite{murphy2010introducing} and GraphChallenge~\cite{graphchallenge}.

Fortunately, recent research efforts find out {\em graph sampling} and {\em random walk}, which significantly reduce the size of \cready{the} original graphs, can benefit learning, mining and \cready{analyzing} large graphs, by capturing the desirable graph properties~\cite{huang2018adaptive,gao2018large,chen2017stochastic}. For instance, {\ipdps~\cite{zeng2019accurate}}, GraphSAINT~\cite{zeng2019graphsaint}, GraphZoom~\cite{deng2019graphzoom}, Pytorch-biggraph~\cite{lerer2019pytorch} and \cready{Deep Graph Library (DGL)}~\cite{wang2019deep} manage to learn from the sampled graphs and arrive at vertex embeddings that are either similar or better than directly learning on the original gigantic graphs~\cite{deng2019graphzoom}. Weisfeiler-Lehman Algorithm~\cite{shervashidze2011weisfeiler} exploits graph sampling to find isomorphic graphs. {Furthermore, various random walk methods are used to generate vertex ranking and embedding in a graph~\cite{perozzi2014deepwalk,page1999pagerank,grover2016node2vec,kyrola2013drunkardmob}.} Sampling and random walk can also help classical graph computing algorithms, such as BFS~\cite{korf2005large} and PageRank~\cite{page1999pagerank}.

Despite great importance,
limited efforts have been made to deploy graph sampling and random walk algorithms on GPUs which come with tempting computing, data access capabilities and ever-thriving community~\cite{gao2018large}.
This paper finds three major challenges that prevent this effort.

\vspace{.1in}
First, although there is a variety of platforms to accelerate traditional graph processing algorithms on GPUs~\cite{liu2015enterprise,wang2016gunrock,liu2019simd,gaihre2019xbfs}, graph sampling and random walk pose unique challenges. Unlike traditional graph algorithms which often treat various vertices and edges similarly and focus on optimizing the operations on the vertex or edge, sampling and random walk algorithms center around how to select a subset of vertex or edge \cready{based upon a bias} (Section \ref{sect:background:ns}). Once selected, the vertex is merely visited again.
Consequently, {\em how to efficiently select the vertices of interest which is rarely studied by traditional algorithms becomes the core of sampling and random walk.}
This process needs to construct and potentially update the selection \cready{bias} repeatedly which is very expensive hence significantly hampers the performance.  

\vspace{.1in}
Second, it is difficult to arrive at a GPU-based framework for various graph sampling and random walk algorithms that address the needs of vastly different applications. Particularly, there exists a rich body of graph sampling and random walk algorithms (detailed in Section~\ref{sect:background:samplerw}), deriving the common functionalities for a framework and exposing different needs as user programming interface is a daunting task. And \cready{offloading} this framework on GPU to enjoy the unprecedented computing and bandwidth capability yet hiding the GPU programming complexity further worsens the challenge. 


\begin{table*}[t]
\new{{\fontsize{8}{10}\selectfont
\centering
\begin{tabular}{|c|c|c|c|c|c|}

\hline
\multicolumn{2}{|c|}{\multirow{3}{*}{\begin{tabular}[c]{@{}c@{}}Bias\\criterion\end{tabular}}} & \multicolumn{4}{c|}{\# of neighbors (NeighborSize)}                                   \\ \cline{3-6} 
\multicolumn{2}{|c|}{}                                                                              & \multicolumn{2}{c|}{Per layer} & \multicolumn{2}{c|}{Per vertex} \\ \cline{3-6} 
\multicolumn{2}{|c|}{}                                                                              & 1   & $>1$    &Constant       & Variable         \\ \hline
\multicolumn{2}{|c|}{Unbiased}    & \begin{tabular}[c]{@{}l@{}}Simple random walk, metropolis hasting random walk,\\ random walk with Jump, random walk with restart\end{tabular}  &   & Unbiased neighbor sampling & \begin{tabular}[c]{@{}l@{}}\new{Forest fire sampling},\\ Snowball sampling\end{tabular}                 \\ \hline
\multirow{2}{*}{Biased}                                  
& Static    &Biased random walk     & Layer sampling    &Biased neighbor sampling   &        \\ \cline{2-6} 
& Dynamic   & Multi-dimensional random walk, Node2vec     &   &  &          \\ \hline
\end{tabular}
}
}

\vspace{-.1in}
\caption{\cready{The design space of traversal based sampling and random walk algorithms.}
\vspace{-.2in}
}
\label{tbl:samplespace}

\end{table*}

\vspace{.1in}Third, an extremely large graph, which drives the needs of graph sampling and random walk, usually goes beyond the size of GPU memory. While there exists an array of solutions for GPU-based large graph processing, namely, unified memory~\cite{gera2020traversing}, topology-aware partition~\cite{karypis1998fast} and vertex-range based partitions~\cite{guattery1995performance}, graph sampling and random walk algorithms, which require all the neighbors of a vertex to present in order to compute the selection probability, exhibit stringent requirement on the partitioning methods.  
In the meantime, the asynchronous and out-of-order nature of graph sampling and random walk provides some unique optimization opportunities for {\em out-of-memory sampling}, which are neither shared nor explored by traditional out-of-memory systems.

This work advocates {\gsag}, to the best of our knowledge, the first GPU-based framework that addresses all the three aforementioned challenges and supports a wide range of sampling and random walk algorithms. Taken together, {\gsag} significantly outperforms the state of the art systems that support either part of sampling or random walk algorithms.  
The contributions of this paper are as follows:

\vspace{.1in}
\begin{itemize}
\item We propose a generic framework which allows end users to express a large family of sampling and random walk algorithms with ease (Section \ref{sect:arch}).

\vspace{.05in}
\item We implement efficient GPU sampling with novel techniques.
Our techniques parallelize the vertex selection on GPUs, with efficient algorithm and system optimizations for vertex collision migration (Section~\ref{sect:single}).

\vspace{.05in}
\item We propose {asynchronous} designs for sampling and random walk, which optimizes the data transfer efficiency for graphs that exceed the GPU memory capacity.
\new{We further scale {\gsag} to multiple GPUs} (Section \ref{sect:outmem}).
\end{itemize}
\vspace{.1in}

The remainder of this paper goes as follows: Section~\ref{sect:background} presents the background. Section~\ref{sect:arch} outlines the Application Programming Interface (API) and Sections~\ref{sect:single} and~\ref{sect:outmem} optimize {\gsag}. Section~\ref{sect:experiment} presents the evaluation results.
Section~\ref{sect:related} discusses the related works and Section \ref{sect:conclusion} concludes.

\section{Background}
\label{sect:background}

\begin{figure*}[hbt!]
	\centering
	\includegraphics[width=\linewidth]{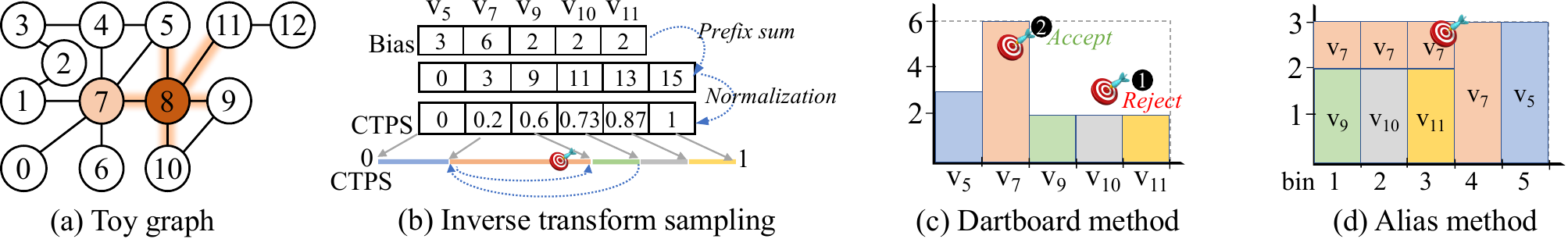}
	\caption{Example of graph sampling and vertex selection techniques. (a) A toy graph example to select a neighbor of $v_8$ ($v_5, v_7, v_9, v_{10}, v_{11}$), assuming the bias of a neighbor is defined as its degree. (b) Inverse Transform Sampling which does a binary search on a 1-D space to select $v_7$. (c) Dartboard method that rejects \protect\circled{1} and accepts \protect\circled{2} ($v_7$). (d) Alias method that selects $v_7$.
	}
	\label{fig:example}
\end{figure*}

\subsection{Graph Sampling \& Random Walk Variations}
\label{sect:background:samplerw}
\new{
This section presents the required background for various graph sampling and random walk algorithms~\cite{leskovec2006sampling}.}
Graph sampling refers to the random exploration of a graph, which results in a subgraph 
of the original graph.

\new{\textbf{One Pass Sampling}
only goes through the original graph once to extract a sample.
Random node and random edge sampling belong to this category \cite{leskovec2006sampling}.
They select a subset of vertices/edges in original graph uniformly and randomly.
}

\textbf{Traversal based Sampling}
often traverses the graph in a Breath-First Search manner to better preserve the properties of original graphs \cite{hu2013survey}.
Traversal based sampling follows \textit{sampling without replacement} methodology, i.e., it avoids sampling the same vertex more than once.

As shown in Table \ref{tbl:samplespace},
traversal based sampling algorithms are categorized based upon the number of sampled neighbors, called {\textit{NeighborSize}}, and the criterion to select neighbors, which is referred to as {\em bias}.
Snowball sampling~\cite{stivala2016snowball} initiates the sample using a set of uniformly selected seed vertices.
Iteratively, it adds all neighbors of every sampled vertex into the sample, until a required depth is reached.
Neighbor sampling~\cite{neighborsampling} 
samples a constant number of neighbors per vertex. The sampling could be \cready{either} biased or unbiased.
Forest fire sampling~\cite{leskovec2006sampling} can be regarded as a probabilistic version of neighbor sampling, which selects a variable number of neighbors for each vertex based on a burning probability.
Unlike neighbor and forest fire sampling, which select neighbors for each vertex independently, layer sampling \cite{gao2018large} samples a constant number of neighbors for all vertices present in the frontier in each round.
It repeats this process until a certain depth is reached.

\new{
\textbf{Random Walk} simulates a stochastic process of traversing the graph to form a path of connected vertices. The length of path is constrained by a user given sampling budget.
Random walk can be viewed as a special case of sampling when only one neighbor is sampled at a step with the salient difference lies in that random walk allows repeated appearance of a vertex while sampling does not.
Table \ref{tbl:samplespace} summarizes the design space of random walk algorithms.
}

Similar to traversal based sampling, random walk algorithms use {\em bias} to decide the probability of selecting a certain neighbor.
For unbiased simple random walk, the bias is uniform for all neighbors, i.e., every neighbor has the same chance to be selected.
Deepwalk~\cite{perozzi2014deepwalk} and metropolis hasting random walk \cite{li2015random} are two examples of unbiased random walk. While Deepwalk samples neighbors uniformly, meropolis hasting random walk decides to either explore the sampled neighbor or choose to stay at the same vertex based upon the degree of source and neighbor vertices.

For a biased random walk, the bias varies across neighbors.
Furthermore, depending on how to decide the bias, biased random walks are classified \cready{into} static random walks \cready{and} dynamic random walks.
For static random walk, the bias is determined by the original graph structure and does not change at runtime. Biased Deepwalk~\cite{cochez2017biased} is an example of static random walk which extends the original Deepwalk algorithm. The degree of each neighbor is used as its bias.

\new{Since a simple random walk may get stuck 
locally, 
random walk with jump~\cite{tzevelekas2010random}, random walk with restart~\cite{tong2006fast} and multi-independent random walk~\cite{hu2013survey} are introduced. 
Particularly, random walk with jump 
jumps to a random vertex under a certain probability.
Random walk with restart
jumps to a predetermined vertex.
Multi-independent random walk performs multiple instances of random walk independently.
}

For dynamic random walks, the bias depends upon the runtime states. 
Node2vec~\cite{grover2016node2vec} and multi-dimensional random walk (a.k.a. frontier sampling)~\cite{ribeiro2010estimating} belong to this genre.
Node2vec is an advanced version of Deepwalk which provides more control to the random walk.
The bias of a neighbor depends upon the edge weight and its distance from the vertex explored at preceding step.
In multi-dimensional random walk, a pool of seed vertices are selected at the beginning.
At each step, multi-dimensional random walk explores one vertex $v$ from the pool based on their degrees.
One random neighbor of $v$ is added to the pool to replace $v$.
This process repeats until a desired number of vertices are sampled.

\new{
\textbf{Summary.}
Traversal based sampling and random walk are widely used and share two core similarities: 1) they are based on graph traversal, and 2) they selectively sample vertices based on biases (detailed in Section \ref{sect:background:ns}).
Their difference is the number of sampled neighbors, as shown in Table \ref{tbl:samplespace}.
In the rest of this paper, we use {\bf graph sampling to refer to both traversal based sampling and random walk}, unless explicitly specified.
}

\subsection{Bias based Vertex Selection}
\label{sect:background:ns}

\new{This section discusses the key challenge of graph sampling:} to select vertices based on user defined biases, i.e., {\em bias based vertex selection}.
As discussed in Section \ref{sect:background:samplerw}, all sampling algorithms involve the process of picking up a subset of vertices from a candidate pool of vertices.
For \cready{unbiased graph sampling}, the selection is straightforward: 
one can generate a random integer in the range of 1 to the candidate count and use it to select a vertex.
Vertex selection is more challenging \cready{biased graph sampling}.
Given certain biases, we need to calculate the probability of selecting a certain vertex, which is called {\em transition probability}.
Theorem \ref{thm:tp} gives the formula to calculate transition probabilities from biases.

\begin{theorem}
\label{thm:tp}
Let \new{vertices $v_1, v_2, ..., v_{n}$} be the $n$ candidates, and the transition probability of $v_k$, i.e., $t_k$, be proportional to the {\em bias} $b_{k}$.
Then, one can formally obtain 
\new{$t_k=\frac{b_{k}}{\sum_{i=1}^{n} b_{i}}$}.
\end{theorem}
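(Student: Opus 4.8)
The plan is to unpack the single hypothesis---that each transition probability $t_k$ is proportional to the corresponding bias $b_k$---and combine it with the one structural constraint that any collection of transition probabilities must satisfy, namely that they form a valid probability distribution over the $n$ candidates. Proportionality means that there exists a single constant $c$, independent of $k$, such that $t_k = c\,b_k$ for every $k \in \{1,\dots,n\}$. The crux of the argument is therefore to pin down the value of $c$, after which the claimed formula follows immediately.

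First I would invoke the normalization requirement. Since exactly one of the $n$ candidates is selected at each step, the outcomes are exhaustive and mutually exclusive, so $\sum_{k=1}^{n} t_k = 1$. Substituting the proportionality relation gives $\sum_{k=1}^{n} c\,b_k = 1$, and pulling the constant out of the sum yields $c \sum_{i=1}^{n} b_i = 1$. Assuming the biases are not all zero, so that $\sum_{i=1}^{n} b_i > 0$ and the distribution is well defined, I would solve for $c = \bigl(\sum_{i=1}^{n} b_i\bigr)^{-1}$. Back-substituting this value into $t_k = c\,b_k$ then produces the claimed closed form $t_k = b_k / \sum_{i=1}^{n} b_i$, completing the derivation.

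There is no genuine obstacle here: the statement is essentially the definition of a normalized weight, and the proof is a one-line normalization argument. The only point worth stating explicitly is the implicit assumption that the total bias is strictly positive, which is what guarantees that $c$ is finite and that each $t_k$ lands in $[0,1]$; this is the sole ingredient used beyond the proportionality hypothesis itself.
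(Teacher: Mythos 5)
Your proof is correct: the normalization argument (proportionality constant $c$ determined by $\sum_k t_k = 1$, giving $c = 1/\sum_{i=1}^{n} b_i$) is the standard and essentially only derivation of this fact. The paper itself states this theorem without any proof, treating it as immediate, so your argument simply makes explicit the one-line reasoning the authors implicitly rely on, and your added remark that $\sum_i b_i$ must be strictly positive is a reasonable (and correct) caveat the paper leaves unstated.
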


Theorem \ref{thm:tp} underscores that \textit{bias} is the key to calculate transition probability.
All popular vertex selection algorithms -- inverse transform sampling~\cite{olver2013fast}, dartboard \cite{yang2019knightking}, and alias method~\cite{walker1977efficient,li2014reducing} -- obey this rule.

The key idea of inverse transform sampling is to generate the cumulative distribution function of the transition probability.
Fig.~\ref{fig:example}(b) shows an example.
First, inverse transform sampling computes the prefix sum of biases of candidate vertices, to get an array $S$, where \cready{$S_m = \sum_{i=1}^{m} b_{i-1}$ ($1 \le m \le n+1)$ and $n$= total \# of candidate vertices}.
In Fig.~\ref{fig:example}(b), $S = \{0, 3, 9, 11, 13, 15\}$.
Then $S$ is normalized using \cready{$S_{n+1}$}, to get array $F$, where \cready{$F_m = S_m/S_{n+1}\ (1 \le m \le n+1)$}.
$F = \{0, 0.2, 0.6, 0.73, 0.87, 1\}$ in Fig.~\ref{fig:example}(b).
In this way, the transition probability of $v_k$ can be derived with $F$, because

\setlength{\belowdisplayskip}{0pt} \setlength{\belowdisplayshortskip}{0pt}
\setlength{\abovedisplayskip}{0pt} \setlength{\abovedisplayshortskip}{0pt}

\new{{\footnotesize
\begin{equation}
\label{equ:tp}
\begin{aligned}
t_k&=\frac{b_{k}}{\sum_{i=1}^{n} b_{i}}=\frac{\sum_{i=1}^{k} b_{i}-\sum_{i=1}^{k-1} b_{i}}{\sum_{i=1}^{n} b_{i}}\\
&=\frac{S_{k}-S_{k-1}}{S_n}
=\frac{S_{k}}{S_n}-\frac{S_{k-1}}{S_n}=F_{k}-F_{k-1}.
\end{aligned}
\end{equation}
}
}

\noindent We call the array of $F$ \textbf{Cumulative Transition Probability Space (CTPS)}.
To select a neighbor, inverse transform sampling generates a random number $r$ in the range of (0,1), and employs a binary search of $r$ over the CTPS.
Assuming $r=0.5$ in Fig. \ref{fig:example}(b), it falls between \cready{$F_2 = 0.2$ and $F_3 = 0.6$.}
As a result, the \cready{second} candidate $v_7$ is selected on the CTPS.
When implemented sequentially, ITS has the computational complexity of $O(n)$, determined by the prefix sum calculation.

Dartboard \cite{yang2019knightking} uses 2D random numbers to select/reject vertices.
As shown in Fig.~\ref{fig:example}(c), we build a 2D board using the \cready{bias of each vertex as a bar}, and then throw a dart to the 2D board \cready{formed by two random numbers}.
If it does not hit any bar (e.g., \circled{1}), we reject the selection and throw another dart, until a bar is hit (e.g., \circled{2}).
This method may require many trials before \cready{picking up a vertex} successfully, especially for scale-free graphs where a few candidates have much larger biases than others.
Similar to dartboard, the alias method~\cite{li2014reducing} also uses a 2D \cready{board}.
To avoid rejection, the alias method converts the sparse 2D board into a dense one as shown in Fig. \ref{fig:example}(d).
It breaks down and distributes large biases across bins on the x axis, 
with the guarantee that a single bin contains at most two vertices.
The drawback of alias method is its high preprocessing cost to break down and distribute biases, which is not suitable for GPUs.

\section{$\gsag$ Architecture}
\label{sect:arch}

\vspace{0.1in}
\subsection{Motivation}
\label{sect:background:moti}

\new{
\textbf{Need for Generic Sampling Framework.} 
After sifting across numerous graph analytical frameworks (detailed in Section~\ref{sect:related}), 
we find the need of a new framework for graph sampling, because
\textit{sampling algorithms pose distinct needs on both the framework design and APIs}. 
For framework design, several sampling algorithms, e.g., layer sampling, require the information beyond a vertex and its neighbors for computing, which postulates hardship for traditional vertex-centric frameworks that limit the view of a user to a vertex and its 1-hop neighbors. 
When it comes to API design, \textit{\textbf{bias} is the essence of sampling and random walk}.
In comparison, traditional graph frameworks focus upon the operators that alter the information on an edge or a vertex, e.g., minimum operator in single source shortest path. We also notice recent attempts, e.g., KnightKing~\cite{yang2019knightking} and GraphSAINT~\cite{zeng2019graphsaint}, but they cannot support both sampling and random walk algorithms.}

\textbf{Need for Sampling and Random Walk on GPUs.}
For sampling, short turnaround time is the key.
It is also the root cause of the invention of sampling~\cite{zeng2019accurate,lofgren2014fast}. 
The good news is that GPU is a proven vehicle to drive an array of graph algorithms beyond their performance ceiling~\cite{liu2015enterprise,liu2016ibfs,wang2016gunrock,liu2019simd,gaihre2019xbfs,pandey2019h,bisson2017high}, thanks to the unprecedented computing capability and memory bandwidth~\cite{keckler2011gpus}. When it comes to sampling which are much more random than traditional graph algorithms, GPUs will best CPU at even larger margins because extreme randomness puts the large caches of CPU in vein.

\subsection{{\gsag}: A Bias-Centric Sampling Framework}
\label{sect:arch:overview}

{\gsag} offloads sampling and random walk on GPUs with the goal of a \textit{simple} and \textit{expressive} API and a \textit{high performance} framework.
Particularly, \textit{simple} means the end users can program {\gsag} without knowing the GPU programming syntax.
\textit{Expressiveness} requires {\gsag} to not only support the known sampling algorithms discussed in Section~\ref{sect:background:samplerw}, but also prepare to support emerging ones.
\textit{High performance} targets the framework design. 
That is, the programming simplicity does not prevent {\gsag} from exploring major GPU and sampling related optimizations. 

\begin{figure}[t]
	\centering
	\includegraphics[width=0.89\textwidth]{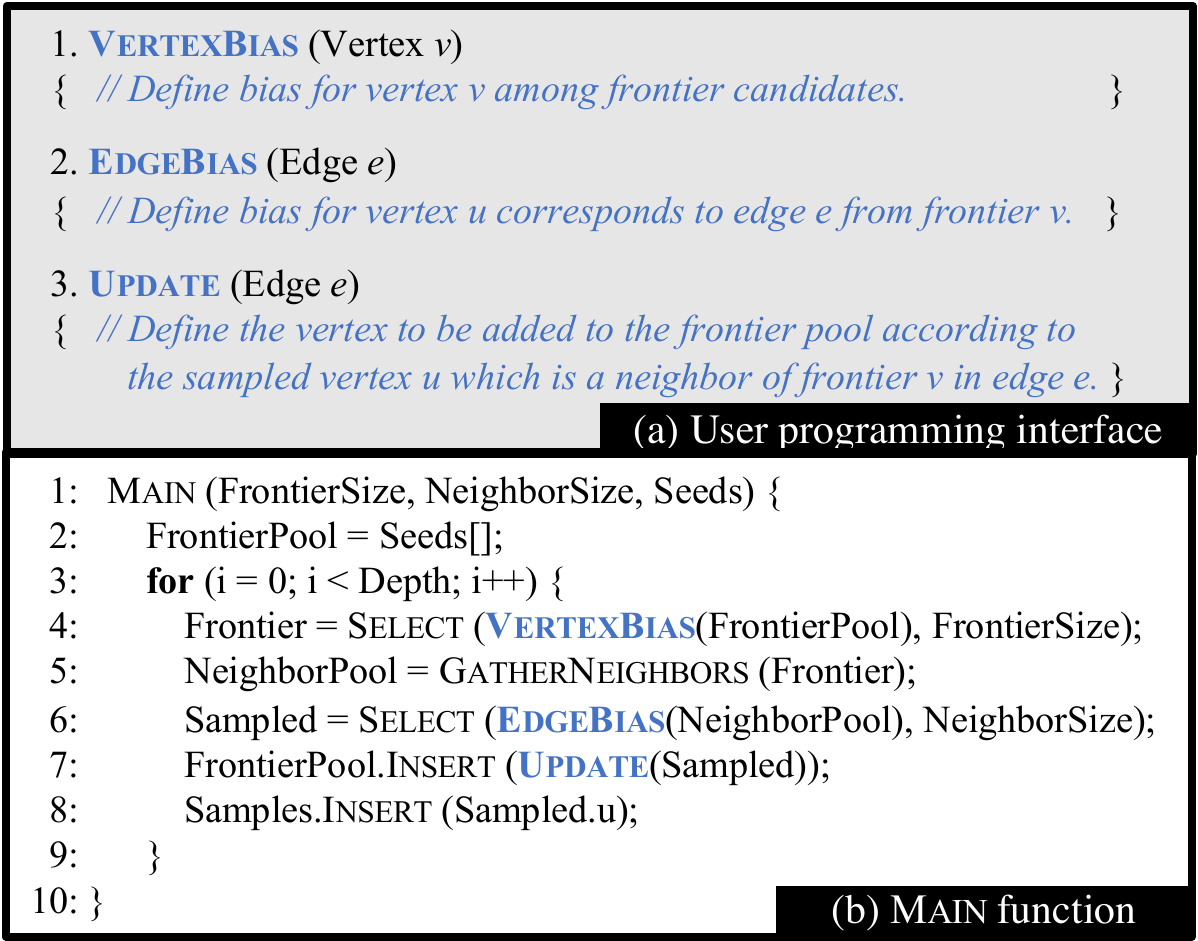}
	\caption{\new{\gsag~framework and API functions.}
	}
	\label{fig:api}
\end{figure}

{\gsag} encompasses two types of user involvements, i.e., parameter and API based options.  The parameter-based option only needs a choice from the end users thus is simple, e.g., deciding the number of selected frontier vertices ($FrontierSize$ in line 4 of Fig. \ref{fig:api}(b)) and neighbors ($NeighborSize$ in line 6).
API based involvement, in contrast, provides more expressiveness to users. Particularly, {\gsag} offers three user defined API functions \cready{as shown in Fig.~\ref{fig:api}(a)}, \cready{most} of which surround bias, that is, \textsc{VertexBias}, \textsc{EdgeBias}, and \textsc{Update}.
We will discuss the design details of these API functions in Section \ref{sect:arch:api}.

Fig. \ref{fig:api}(b) gives an overview of the {\gsag} algorithm.
Particularly, bias based vertex selection occurs in two places: to select frontier vertices from a pool (line 4), and to select the neighbors of frontier vertices (line 6).
While the latter case is required by all graph sampling algorithms, the former becomes essential when users want to introduce more randomness, such as \new{multi-dimensional random walk}.


\begin{figure}[t]
	\centering
	\includegraphics[width=0.89\textwidth]{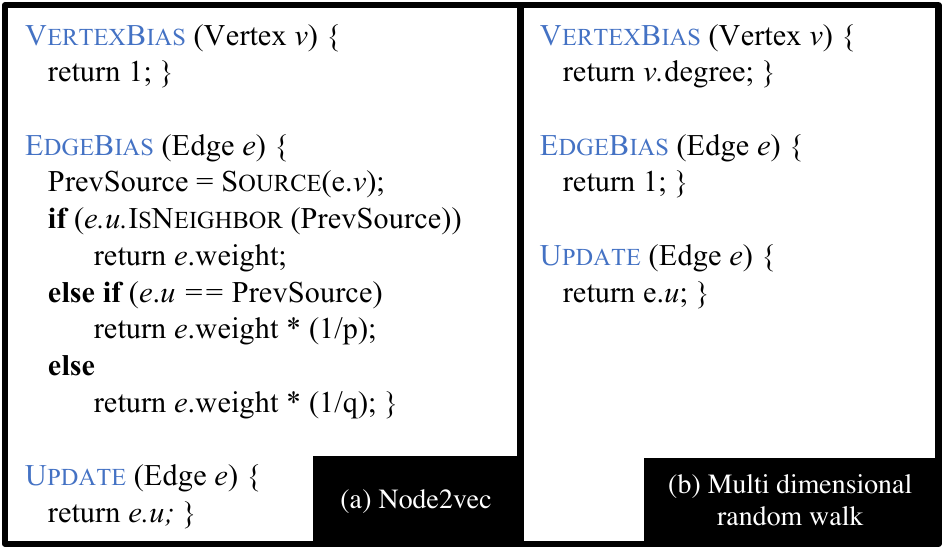}
	\vspace{-.1in}
	\caption{\new{Implementing two sampling algorithms with \gsag~API.}
	}
	\label{fig:api_example}
\end{figure}

In the beginning, the frontier \cready{$FrontierPool$} is initialized with a set of seed vertices (line 2).
Sampling starts from these seeds until reaching the desired depth (line 3).
In each iteration of the while loop, first, \textbf{\textsc{VertexBias}} is called on the \cready{$FrontierPool$} to retrieve the bias for each candidate vertex.
\textsc{Select} method uses the biases provided by \textbf{\textsc{VertexBias}} to choose $FrontierSize$ vertices as the current frontier (line 4).
Next, all neighbors of the frontier vertices are collected in the $NeighborPool$ using the \textsc{GatherNeighbors} method (line 5).
For these neighbors, we first define their biases using the \textbf{\textsc{EdgeBias}} method.
Similarly, \textsc{Select} method uses the biases to choose $NeighborSize$ neighbors from the $NeighborPool$ (line 6).
From the selected neighbors, \textbf{\textsc{Update}} is used to \cready{pick} new vertices for the \cready{$FrontierPool$} (line 7).
The selected neighbors are also added to the final sample \cready{list $Sampled$} (line 8) before we move forward to the next iteration.


\subsection{{\gsag} API}
\label{sect:arch:api}

\textbf{\textsc{VertexBias}} defines the bias associated with a candidate vertex of the \cready{$FrontierPool$}.
We often use the pertinent property of vertex to derive the bias. Equation~(\ref{equation:vertexbias}) formally defines the bias for each vertex $v$ in the \cready{$FrontierPool$}. 
\new{We apply function $f_{vBias}$ over the property of $v$ to define the associated bias.}
\new{\begin{equation}
    \label{equation:vertexbias}
     \textbf{\textsc{VertexBias}}\underset{v~{\in}~FrontierPool}{\longleftarrow} f_{vBias} (v).
\end{equation}}

Using multi-dimensional random walk as an example, it uses the vertex degree as a bias for the vertex of interest.


\textbf{\textsc{EdgeBias}} defines the bias of each neighbor in the \cready{\textit{NeighborPool}}.
It is named as {\textsc{EdgeBias}} because every neighboring vertex is associated with an edge.
While, again, any static or dynamic bias is applicable, a typical bias is induced from the properties of \cready{the associated edge}.
Equation~(\ref{equation:edgebias}) defines \textsc{EdgeBias} formally.
Let $v$ be the source vertex of $u$.
\new{Assuming edge $e=(v, u)$ carries the essential properties of $v$, $u$ and $e$, we arrive at the following edge bias:
\begin{equation}
{
\label{equation:edgebias}
     \textbf{\textsc{EdgeBias}} \underset{e~{\in}~NeighborPool}{\longleftarrow} f_{eBias} (e)
}
\end{equation}
}

\textbf{\textsc{Update}} decides the vertex that should be added to the \cready{$FrontierPool$} based on the sampled neighbors.
It can return any vertex to provide maximum flexibility.
For instance, this method can be used to filter out vertices that have been visited before for most traversal based sampling algorithms.
Whereas for random walk, this method can be used to implement the jump or restart action in the random walk with jump and with start, respectively. Equation~(\ref{equation:update}) quantifies this method, \new{where we will decide whether to add the sampled vertex $u$, a neighbor of frontier $v$ from edge $e$ into \cready{\textit{FrontierPool}} based upon the properties of $e$ and its endpoints.
\begin{equation}
\label{equation:update}
{
FrontierPool \underset{}{\longleftarrow}  \textbf{\textsc{Update}}(e) 
}
\end{equation}
}
\vspace{-.1in}

\subsection{Case Study}
\label{sect:arch:case}

{\gsag} can support all graph sampling and random walk algorithms introduced in Section \ref{sect:background:samplerw}.
Fig.~\ref{fig:api_example} exhibits how to use {\gsag} to implement two popular algorithms: Node2vec and \new{multi-dimensional random walk.}

Without loss of generality, we use the simplest example, i.e., \new{multi-dimensional random walk} to illustrate how {\gsag} works, as shown in Fig.~\ref{fig:layer_example}.
$FrontierSize$ and $NeighborSize$ are set as \new{3 and 1 respectively}.
{\textsc{VertexBias}} is based on the degree of vertices in the frontier pool in \new{multi-dimensional random walk}.
{\textsc{EdgeBias}} returns 1, resulting in the same transition probability for every neighbor.
{\textsc{Update}} always adds the currently sampled neighbor to the FrontierPool.

\begin{figure}[t]
	\floatbox[{\capbeside\thisfloatsetup{capbesideposition={right,top},capbesidewidth=3.8cm}}]{figure}[\FBwidth]
	{
		\caption{A \new{multi-dimensional random walk} example. Assuming \{$v_8$, $v_0$, $v_3$\} in FrontierPool$_t$, we use \textsc{VertexBias} to select $v_8$ as the sampling frontier at iteration $t$. Based on \textsc{EdgeBias} in Fig.~\ref{fig:api_example}(d), we select $v_7$, and put it in sampled edges array. According to \textsc{Update}, {\gsag} further puts $v_7$ in FrontierPool$_{t+1}$ as \{$v_0$, $v_3$, $v_7$\}. Similar process continues until {\gsag} gathers adequate samples.\vspace{-.01in}
			}
	\label{fig:layer_example}
	}
	{
		\includegraphics[width=.95\linewidth]{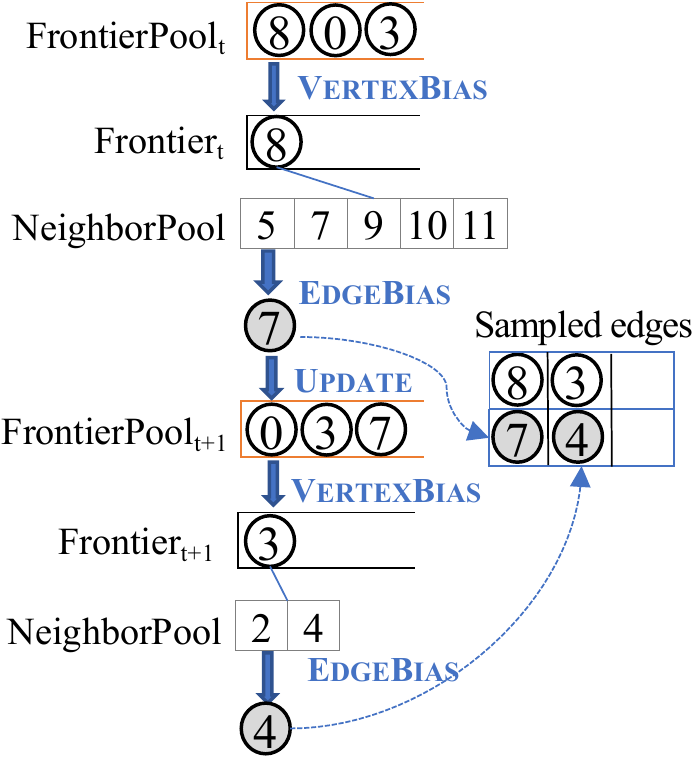}
	}
\end{figure}

\section{Optimizing GPU Sampling}
\label{sect:single}

Fig. \ref{fig:api}(b) has shown the overall algorithm of \gsag.
In this section, we discuss how to implement this algorithm efficiently on GPUs.
We will discuss our general strategies to parallelize the \textsc{Select} function on GPUs (Section \ref{sect:single:ves}) and how to address the conflict when multiple GPU threads select the same vertex (Section \ref{sect:single:bitmap}).

\subsection{{Warp-Centric Parallel Selection}}
\label{sect:single:ves}

The core part of the \gsag~algorithm is to {\em select} a subset of vertices from a pool (lines 4 and 6 in Fig. \ref{fig:api}(b)).
As discussed in Section~\ref{sect:background:ns}, several algorithms have been proposed in this regard.
In this paper, we adopt inverse transform sampling \cite{olver2013fast} for GPU vertex selection, because 1) it allows to calculate transition probabilities with flexible and dynamic biases, and 2) it shows more regular control flow which is friendly to GPU execution.
Fig. \ref{fig:select} illustrates the \textsc{Select} algorithm using inverse transform sampling.
We aim to have an efficient GPU implementation of it.





\textbf{Inter-warp Parallelism.}
{Each thread warp, no matter intra or inter thread blocks, is assigned to sample a vertex in \cready{$FrontierPool$}}.
To fully saturate GPU resources, thousands of \cready{candidate vertices needs to be sampled concurrently.}
There are two sources of them.
First of all, many sampling algorithms naturally \cready{sample all vertices in $FrontierPool$ concurrently.}
For instance, \cready{neighbor sampling allows all vertices in $FrontierPool$ to be sampled concurrently and} requires a separate \cready{$NeighborPool$} for each vertex in the \cready{$FrontierPool$.}

\begin{figure}[t]
	\centering
	\includegraphics[width=0.7\linewidth]{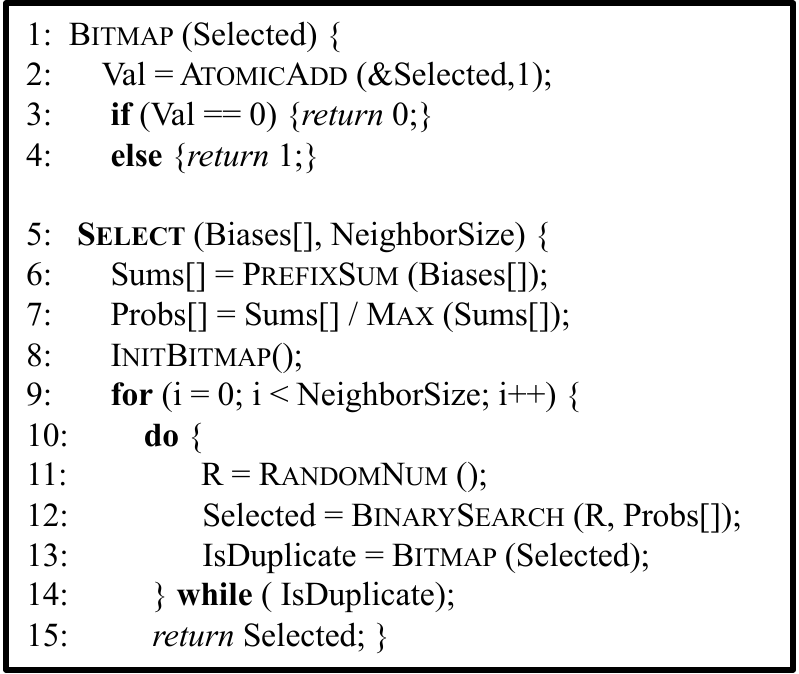} 
	\vspace{-.05in}
	\caption{\new{The unoptimized implementation of \textsc{Select} function}.
	\vspace{-.02in}}
	\label{fig:select}
\end{figure}

\begin{figure*}[t]
	\centering
	\includegraphics[width=0.99\linewidth]{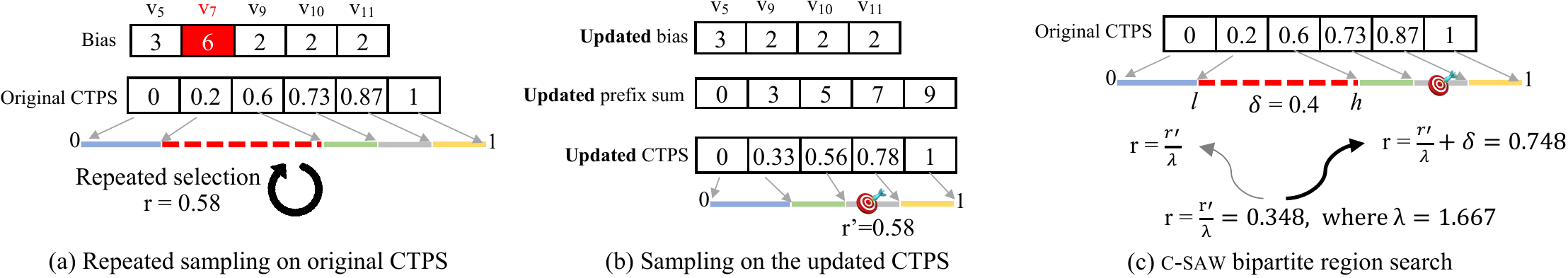}
	\caption{\new{Assuming $v_7$ is already selected (dotted line in CTPS): (a) naive repeated sampling on the original CTPS, (b) updated sampling on the recalculated CTPS, and (c) our bipartite region search approach.}
	\vspace{-.02in}
	}
	\label{fig:brs}
\end{figure*}

\cready{Second}, most sampling applications including {Graph Convolutional Network} (GCN)~\cite{kipf2016semi} , Deepwalk, Node2vec, and {Personalized PageRank} (PPR)~\cite{lofgren2014fast}, need to launch many instances of sampling either from the same seeds or different seeds. \new{Here, an \textit{instance} generates one sampled graph from the original graph. Particularly, for all algorithms except multi-dimensional random walk, an instance starts with one source vertex. For multi-dimensional random walk, an instance has multiple source vertices, which collectively generate one sampled graph.}
Applications like GCN require multiple sample instances for training the model~\cite{gao2018large,zeng2019graphsaint,chen2018fastgcn}, while Deepwalk, Node2vec, and PPR require multi-source random walk to either generate vertex embeddings or estimate PPR~\cite{alamgir2010multi, perozzi2014deepwalk, grover2016node2vec}.
With thousands of concurrent instances, {\gsag} is able to leverage the full computing power of GPU.
Since the inter-warp parallelism is straightforward to implement, we focus on exploiting the intra-warp parallelism for {\gsag}.

\textbf{Intra-warp Parallelism.}
A thread warp is used to execute one instance of \textsc{Select} on a pool of vertices.
An obvious alternative is to use a thread block. Most real world graphs follow power-law degree distribution, i.e., the majority of the vertices in the graph have very few edges. Using a thread block for a neighbor pool will fail to saturate the resource.
Our evaluation shows that using thread warps achieves $\sim 2 \times$ speedup compared with using thread blocks.
Thus we choose to use thread warps to exploit the parallelism within \textsc{Select}.

As shown in Fig. \ref{fig:select}, first, \textsc{Select} calculates the prefix sum of the biases of all vertices (line {6}).
Fortunately, parallel prefix sum is a well-studied area on GPUs.
In this paper, we adopt the Kogge-Stone algorithm\cready{~\cite{merrill2009parallel}} which presents superior performance for the prefix sum of warp-level where all threads execute in lock-step.
The normalization of prefix sums (line {7}) can be naturally parallelized by distributing the division of different array elements across threads.

To parallelize the vertex selection loop (line {10-14}), 
{\gsag} dedicates one thread for each vertex selection to maximize the parallelism.
For each loop iteration, a random number is generated to select one vertex, as introduced in Section \ref{sect:background:ns}.
However, this creates a crucial challenge that different threads may select the same vertex, i.e., {\em selection collision}.





\subsection{Migrating Selection Collision}
\label{sect:single:bitmap}

To migrate the aforementioned {\em selection collision},
we propose two interesting solutions: bipartite region search, and bitmap based collision detection. Before introducing our new design, we first discuss naive solutions.  
\vspace{0.05in} 

\textbf{Naive Solutions.}
A naive solution is to have a do-while loop (line {10-14} in Fig. \ref{fig:select}) to re-select another one until success, i.e., {\em repeated sampling}.
However, many iterations may be needed to make a successful selection.
As shown in Fig.~\ref{fig:brs}(a), if the region of $v_7$ (i.e., 0.2 - 0.6 in CTPS) is already selected, our newly generated random number 0.58 will not lead to a successful selection.
In fact, our evaluation observes that this method suffers for scale-free graphs whose transition probability can be highly skewed, or when a large fraction of the candidates need to be selected i.e larger $NeighborSize$.

Another solution is to recalculate the CTPS by excluding the already selected vertices, i.e., {\em updated sampling}, such as Fig.~\ref{fig:brs}(b).
Then we can always pick unselected vertices by searching through the updated CTPS.
Particularly in Fig.~\ref{fig:brs}(b), we will perform another Kogge-Stone prefix-sum for the new bias array \{3, 2, 2, 2\} towards \{0, 3, 5, 7, 9\}. Consequently, the CTPS becomes \{0, 0.33, 0.56, 0.78, 1\}. Then, the random number $r=0.58$ selects $v_{10}$.
Recalculating prefix sum is, however, time consuming.

\vspace{0.05in} 
\textbf{Bipartite Region Search} inherits the advantages of both repeated and updated sampling, while avoiding their drawbacks.
That is, it {\em does not need the expensive CTPS update} compared with updated sampling, while greatly {\em improve the chance of successful selection} compared with repeated sampling.

Particularly, while updated sampling updates the CTPS without changing the random number as shown in Fig. \ref{fig:brs}(b), the key idea of bipartite region search is to adjust the random number $r$ so that the CTPS remains intact and can be reused.
Most importantly, bipartite region search warrants that its random number adjustment leads to the same selections as updated sampling.
Note, this method is called bipartite region search because when the random number selects an already selected vertex, bipartite region search searches either the right or the left side of the already selected region in CTPS. 
Below, we discuss this adjustment.





\vspace{0.05in}
{\footnotesize
\centering
\noindent\fbox{%
\parbox{0.95\linewidth}{%
\noindent \circled{1} Generate a random number $r'$ $(0 \leq r' < 1)$.


\noindent \circled{2} Use $r'$ to select a vertex in CTPS. If the vertex has not been selected, done. Otherwise, the region that $r'$ falls into corresponds to a pre-selected vertex. Assume the boundary of this region in CTPS is $(l, h)$. Go to \circled{3}.

\noindent {\circled{3} Let $\lambda = 1/(1 - (h - l))$, $\delta = h - l$ and update $r$ to $r'/\lambda$. If $r < l$, select $(0, l)$ and go to \circled{4}.
Otherwise select $(h, 1)$ and go to \circled{5}.}

\noindent {\circled{4} Use the updated $r$ to search in $(0, l)$. If updated $r$ falls in another selected region, go to \circled{1}. Otherwise done.}

\noindent {\circled{5} Further update $r$ to $r + \delta$ and search in $(h, 1)$. If updated $r$ falls in another selected region, go to \circled{1}. Otherwise done.}
}
}
\par}
\vspace{0.05in}

{Fig.~\ref{fig:brs}(c) explains how bipartite region search works for the same example in Fig.~\ref{fig:brs}(b). 
Assuming we get a random number $r'=0.58$, it corresponds to $v_7$ in the original CTPS. 
Since $v_7$ is already selected, bipartite region search will adjust this random number to 0.348 in \circled{3}.
Since the updated $r = 0.348 > l = 0.2$, bipartite region search selects $(0.6, 1)$ to explore. Consequently in \circled{5}, we further add $\delta = 0.4$ to $r$ which leads to $r = 0.748$.
0.748 corresponds to $v_{10}$, and thus results in a successful selection. 
\cready{\textit{It is important to note that this selection is identical as updated sampling in Fig.~\ref{fig:brs}(b).}} }

\vspace{0.05in} 
\textbf{Proof of Bipartite Region Search.}
We will prove the soundness of bipartite region search mathematically, in the scenario when one and only one vertex has been pre-selected.



\begin{theorem}\label{thm:brs}
Assuming $v_k$'s probability region is $(F_k, F_{k+1})$ in the original CTPS. Remind the definition of $F$ in Section \ref{sect:background:ns}. Let $v_s$ be the pre-selected vertex, and $F'_k$ be the probability in the updated CTPS. $l = F_k$, $h = F_{k+1}$, $\lambda = \frac{1}{1 - (h - l)}$ and $\delta = h - l$, we prove that:

{\footnotesize
\begin{equation}\label{eq:brs}
      F'_{k} =
    \begin{cases}
      \lambda\cdot F_k; & \text{$k<s,$}\\
      \lambda\cdot (F_k - \delta); & \text{otherwise.}
    \end{cases}       
\end{equation}
}
\end{theorem}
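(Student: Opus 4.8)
The plan is to prove the boundary relation~(\ref{eq:brs}) by working directly from the definitions of the prefix-sum array $S$ and the normalized CTPS $F$ introduced in Section~\ref{sect:background:ns}, and tracking how each changes when the single pre-selected vertex $v_s$ is deleted. Writing $S_k=\sum_{i=1}^{k-1} b_i$ for the prefix sum of biases preceding $v_k$ and $S_{n+1}=\sum_{i=1}^{n} b_i$ for the total bias, we have $F_k=S_k/S_{n+1}$. The first thing I would record are the two identities that drive the whole argument: (i) $\delta=h-l=F_{s+1}-F_s=b_s/S_{n+1}$ is exactly the transition probability $t_s$ of the pre-selected vertex, by Equation~(\ref{equ:tp}); and (ii) after deleting $v_s$ the surviving vertices carry total bias $S_{n+1}-b_s=(1-\delta)S_{n+1}$, so the normalization constant changes from $1/S_{n+1}$ to $\lambda/S_{n+1}$ with $\lambda=1/(1-\delta)$. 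I read the statement's $l=F_k,\,h=F_{k+1}$ as a typo for $l=F_s,\,h=F_{s+1}$, i.e.\ the boundaries of the region of the pre-selected vertex.

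With these identities in hand, the core of the proof is a two-case computation of the updated prefix sum $S'_k$ over the surviving vertices. When $k<s$, the bias $b_s$ lies strictly to the right of position $k$, so it never enters the partial sum and $S'_k=S_k$ is unchanged; when $k>s$, the bias $b_s$ is among $b_1,\dots,b_{k-1}$ and is removed, so $S'_k=S_k-b_s=S_k-\delta S_{n+1}$. Dividing each case by the new total $(1-\delta)S_{n+1}$ and substituting $F_k=S_k/S_{n+1}$ then yields $F'_k=\lambda F_k$ for $k<s$ and $F'_k=\lambda(F_k-\delta)$ for $k>s$, which is precisely~(\ref{eq:brs}).

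I expect the only genuine subtlety to be bookkeeping around the index $s$ and the renormalization, rather than a deep obstacle, so I would add a consistency check at the seam: after deletion the region of $v_s$ collapses to a single point, and indeed the two formulas agree there, since the left branch gives $\lambda F_s=\lambda l$ while the right branch at $k=s+1$ gives $\lambda(F_{s+1}-\delta)=\lambda(h-\delta)=\lambda l$. Finally, I would close the loop back to the algorithm of Section~\ref{sect:single:bitmap} to justify calling this ``bipartite region search'': inverting~(\ref{eq:brs}) maps a point $r'$ of the updated CTPS back to the original CTPS as $r'/\lambda$ when $r'<\lambda l$ and as $r'/\lambda+\delta$ otherwise, which is exactly the random-number adjustment performed in steps~\circled{3}--\circled{5}. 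This shows that the adjustment reproduces the selection of updated sampling \emph{without} ever recomputing the prefix sum, establishing soundness in the single-pre-selection case as claimed.
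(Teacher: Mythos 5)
Your proposal is correct and follows essentially the same route as the paper's proof: both split into the cases $k<s$ and $k>s$, observe that removing $v_s$ leaves the partial bias sum unchanged (resp.\ reduces it by $b_s$) while rescaling the total by $1-\delta$, and use the identities $\lambda = \sum_{i=1}^{n}b_i / \bigl(\sum_{i=1}^{n}b_i - b_s\bigr)$ and $\delta = b_s/\sum_{i=1}^{n}b_i$. Your reading of $l,h$ as the boundaries of the pre-selected vertex's region (i.e.\ $l=F_s$, $h=F_{s+1}$) is the correct interpretation of the theorem statement and is exactly what the paper's own proof uses.
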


\begin{proof}

Adopting Equation~\ref{equ:tp}, we get \new{$F_k = \frac{\sum_{i=1}^{k-1}b_{i}}{\sum_{i=1}^{n}b_{i}}$}. \new{Denoting $\mathbb{F}=\sum_{i=1}^{s-1}b_{i} + \sum_{i=s+1}^{n}b_{i}$}, Theorem~\ref{thm:tp} leads to: 

\new{\footnotesize
\begin{equation}
      F'_{k} =
    \begin{cases}
      \frac{\sum_{i=1}^{k-1}b_{i}}{\mathbb{F}}; & \text{$k<s,$}\\
      \frac{\sum_{i=1}^{s-1}b_{i} + \sum_{i=s+1}^{k-1}b_{i}}{\mathbb{F}}; & \text{otherwise.}
    \end{cases}       
\end{equation}
}

\noindent When $k<s$,

\new{\footnotesize
\begin{align}
F'_k&=\frac{\sum_{i=1}^{k-1}b_{i}}{\mathbb{F}}
    =\frac{\sum_{i=1}^{k-1}b_{i}}{\sum_{i=1}^{n}b_{i}}\cdot \frac{\sum_{i=1}^{n}b_{i}}{\mathbb{F}}
    =F_k\cdot\frac{\sum_{i=1}^{n}b_{i}}{\mathbb{F}}.
\end{align}
}

\noindent Since \new{$\frac{\sum_{i=1}^{n}b_{i}}{\mathbb{F}} = \frac{1}{1 - (h - l)} = \lambda$}, we prove $F'_k = \lambda\cdot F_k$.
When $k > s$,

\new{\footnotesize
\begin{equation}
\begin{aligned}
F'_k&=\frac{\sum_{i=1}^{s-1}b_{i} + \sum_{i=s+1}^{k-1}b_{i}}{\mathbb{F}}
    =\frac{\sum_{i=1}^{s-1}b_{i} + \sum_{i=s+1}^{k-1}b_{i}}{\sum_{i=1}^{n}b_{i}}\cdot \frac{\sum_{i=1}^{n}b_{i}}{\mathbb{F}}\\
    &=\frac{\sum_{i=1}^{s-1}b_{i} + \sum_{i=s+1}^{k-1}b_{i}}{\sum_{i=1}^{n}b_{i}}\cdot\lambda
    =\frac{\sum_{i=1}^{k-1}b_{i} - b_s}{\sum_{i=1}^{n}b_{i}}\cdot\lambda\\
    &=(\frac{\sum_{i=1}^{k-1}b_{i}}{\sum_{i=1}^{n}b_{i}} - \frac{b_s}{\sum_{i=1}^{n}b_{i}})\cdot\lambda
    =(F_k - \frac{b_s}{\sum_{i=1}^{n}b_{i}})\cdot\lambda.
\label{equ:factor}
\end{aligned}
\end{equation}
}

\noindent Since \new{$\frac{b_s}{\sum_{i=1}^{n}b_{i}} = {h-l} = \delta$}, we obtain $F'_k = \lambda\cdot (F_k -\delta)$.
\end{proof}

Theorem~\ref{thm:brs} states that one can adjust the probabilities from the original CTPS to derive the updated CTPS. 
Reversing the transformation direction, we further obtain: 

{\footnotesize
\begin{equation}\label{eq:brs_random}
      F_{k} =
    \begin{cases}
      \frac{F'_k}{\lambda}; & \text{$k<s,$}\\
      \frac{F'_k}{\lambda} + \delta; & \text{otherwise.}
    \end{cases}       
\end{equation}
}

Since $r'$ is the random number for the updated CTPS, we can substitute $F'_k$ with $r'$ in Equation~\ref{eq:brs_random} to derive the corresponding $r$ in the original CTPS that falls right at the region boundaries of original CTPS, e.g., \{0, 0.33, 0.56, 0.78, 1\} in Fig.~\ref{fig:brs}(b) fall right at \{0, 0.2, 0.73, 0.87, 1\} in Fig.~\ref{fig:brs}(c). Further, since $F_k$ is a strictly monotonic function of $F'_k$, it is clear that if $r'$ falls between the region boundaries of the updated CTPS, the derived $r$ will also do so in the original CTPS.
This ensures bipartite region search will make identical selection as if the CTPS is updated.
It is also provable that statistically, the selection probability of our algorithm is the same as the desired transition probability in more complicated scenarios where multiple vertices have been pre-selected.
\vspace{0.05in} 
\textbf{Strided Bitmap for Collision Detection.}
Bipartite region search requires a collision detection mechanism.
We introduce a per vertex bitmap to detect selection collision (line \new{13} in Fig. \ref{fig:select}).
For every candidate vertex, there is a unique bit in the bitmap to indicate whether it has been selected.
The bitmap is shared by all threads of a warp.
After each thread selects a vertex, we perform an atomic compare-and-swap operation to the corresponding bit in the bitmap. 
If the bit is 0, which means no other threads have picked this vertex, we set it to 1.

Since GPUs do not have variables that support bit-wise atomic operations currently, we may use either 8-bit or 32-bit integer variables for bitmap representation, where each bit corresponds to one vertex.
As using 32-bit variables results in more conflicts when updating multiple bits within the same variable, we choose 8-bit variables instead.

To resolve the atomic contentions, we propose to use {\em strided} bitmaps, inspired by the set-associative cache organization \cite{jouppi1990improving}.
A strided bitmap scatters the bits of adjacent vertices across different 8-bit variables, as shown in Fig. \ref{fig:bitmap}. Instead of using the first five bits of the same 8-bit variable to indicate the status of all vertices in the contiguous bitmap, the strided bitmap spreads them into two variables to reduce conflicts.

\begin{figure}[t]
    \centering
    \includegraphics[width=0.95\linewidth]{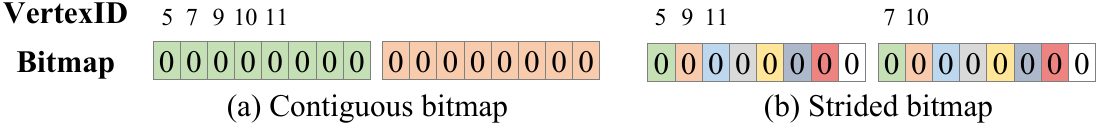}
    \caption{Sampling the neighbors of $v_8$ in Fig. \ref{fig:example}(a), under: (a) contiguous bitmap and (b) strided bitmaps.
    \vspace{-.1in}
    }
    \label{fig:bitmap}
\end{figure}

\new{
\vspace{0.05in} 
\textbf{Data Structures.}
{\gsag} employs three major data structures: frontier queues, per-warp bitmap, and per-warp CTPS.
All these data structures are allocated in the GPU global memory before sampling starts.
A frontier queue is a structure of three arrays, $VertexID$, $InstanceID$, and $CurrDepth$ to keep track of the sampling process.
Till now, all threads share one frontier queue, with a few exceptions that will be introduced in Section \ref{sect:outmem}.
Per-warp bitmaps and CTPSs are stored as arrays and get reused across the entire sampling process.
They are also located in global memory.
}

\section{Out-of-memory \& \new{Multi-GPU} {\gsag}}
\label{sect:outmem}


Thanks to sampling and random walk which lift important obstacles for out-of-memory computation, 
that is, 
they need neither the entire graph nor synchronization during computation. 
This section takes advantage of this opportunity to enable fast out-of-memory \new{and multi-GPU} {\gsag}.

\subsection{Graph Partition}

{\gsag} \new{partitions the graph by simply assigning a contiguous and equal range of vertices and all their neighbor lists to one partition.} We adopt this method instead of advanced topology-aware partition (e.g., \textsc{METIS}~\cite{karypis1995metis, karypis1998fast,guattery1995performance}) and 2-D partition~\cite{boman2013scalable}, for \new{three} reasons. First and foremost, sampling and random walk require all the edges of a vertex be present in order to compute the transition probability.
Splitting the neighbor list of any vertex, which is the case in 2-D partition, would introduce fine-grained communication between partitions, that largely hampers the performance. 
Second, topology-aware partition would require extremely long preprocessing time, as well as yield discontinued vertex ranges which often lead to more overhead than benefit. 
\new{Third, this simple partitioning method allows {\gsag} to decide \cready{which} partition a vertex belongs to in constant time that is important for fast bulk asynchronous sampling (Fig.~\ref{fig:streaming}).} 

\subsection{Workload-Aware Partition Scheduling}
\label{sect:outmem:design}

Since multiple sampling instances are independent of each other, this dimension of flexibility grants {\gsag} the freedom of dynamically scheduling various partitions based upon the workload from both graph partitions and workers (such as GPU kernels and devices). 
\vspace{0.05in}

\textbf{Workload-Aware Partition Scheduling.}
{\gsag} tracks the number of frontier vertices that falls into each partition to determine which partition will offer more workload (\circled{1} in Fig. \ref{fig:streaming}). \new{We refer them as active vertices. Based upon the count, we also allocate thread blocks to each GPU kernel with thread block based workload balancing described in next paragraph.}
Subsequently, the partitions that contain more workload are transferred to the GPU earlier and sampled first (\new{\circled{2}} in Fig. \ref{fig:streaming}).
Non-blocking {cudaMemcpyAsync} is used to copy partitions to the GPU memory asynchronously.
{\gsag} samples this partition until it has no active vertices. \new{
Note that, {\gsag} stores frontier queues from all partitions in the GPU memory. It allows a partition to insert new vertices to its frontier queue, as well as the frontier queues of other partitions to enable communications.}
The actively sampled partition is only released from the GPU memory when its frontier queue is empty.
The reason is that partitions with more active vertices often insert more neighbors in its own frontier queue, which further leads to more workloads.
As a result, this design can reduce the number of partitions transferred from CPU to GPU.

When it comes to computation, we dedicate one GPU kernel to one active partition along with a CUDA stream, in order to overlap the data transfer and sampling of different active partitions.
After parallel partition sampling finishes, we count the vertex number in each frontier queue to decide which partitions should be transferred to GPU for sampling next (\circled{3} in Fig. \ref{fig:streaming}).
The entire sampling is complete when there are no active vertices in all partitions.
\vspace{0.1in}

\textbf{Thread Block based Workload Balancing.}
Depending upon the properties of graphs and sample seeds, frontiers are likely not equally distributed across partitions.
As a result, the sampling and data transfer time are not the same as well.
Since the straggler kernel determines the overall execution time, it is ideal to balance the workload across kernels. 
Consequently, we implicitly partition the GPU resources by controlling the thread block number of different kernels.

\textbf{Example.}
Fig.~\ref{fig:streaming} shows an example of out-of-memory sampling.
Here, we assume three graph partitions (i.e., P$_1$, P$_2$, P$_3$) for \new{the same graph in Fig.~\ref{fig:example}(a)}, two GPU kernels (i.e., Kernel$_1$ and Kernel$_2$), and the GPU memory can contain two active partitions.
\new{If we start sampling from vertices \{0, 2, 8\}}, P$_1$, P$_2$, and P$_3$\ will have \new{2, 0, and 1} active vertices initially.
Hence, kernel K$_1$ is assigned to work on \new{P$_1$} and kernel K$_2$ for P$_3$. To balance the workload, the ratio of thread block numbers assigned to K$_1$ and K$_2$ is set to \new{2:1}.
\new{Assuming vertices 0, 2, and 8 pick 7, 3, and 5, respectively, the frontier queues for P$_1$, P$_2$ and P$_3$ become \{3\}, \{7, 5\} and \{$\phi$\} as shown in bottom right of Fig.~\ref{fig:streaming}. Subsequently, K$_2$ exits because P$_3$'s frontier queue is empty, while K$_1$ continues sampling 3 and puts 4 into the frontier queue of P$_2$.
Then, K$_1$ also exits and leaves \{7, 5, 4\} in the frontier queue of P$_2$ to be scheduled next.}

\begin{figure}[t]
    \centering
    \includegraphics[width=.95\linewidth]{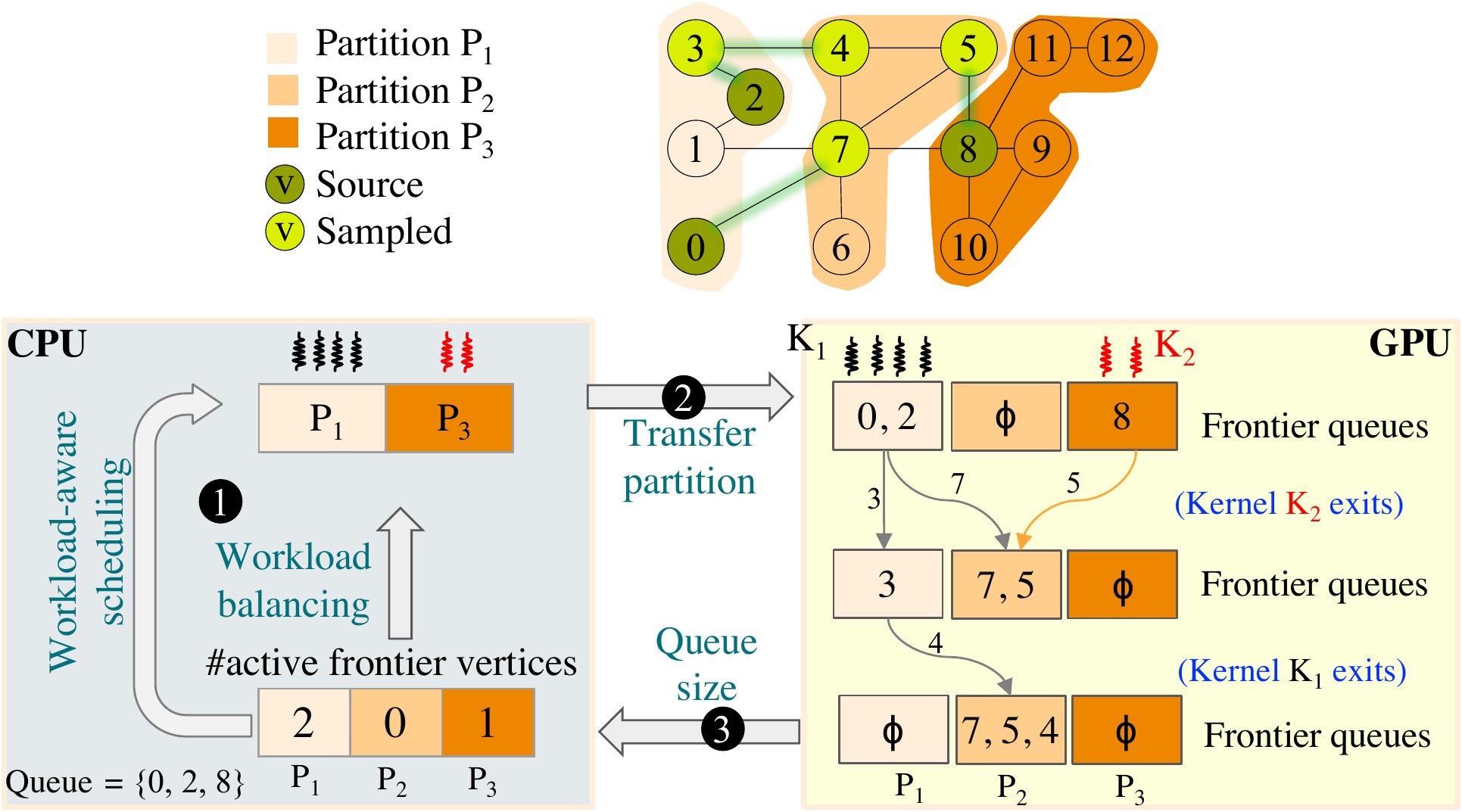}
    \caption{Workload-aware scheduling of graph partition. \new{The upper part shows the toy graph and its partition. We start sampling within partitions 1 and 3. The lower part shows an example for out-of-memory sampling. For simplicity, we hide InstanceID and CurrDepth from the frontier queue. }
    \vspace{-.1in}
    }
    \label{fig:streaming}
\end{figure}

\vspace{0.05in} 
\new{\textbf{Correctness.}
The out-of-order nature of the workload-aware partition scheduling does not impact the correctness of {\gsag}.
With out-of-order scheduling, the sampling of one instance is not in the breath-first order as in the in-order case.
The actual sampling order can be considered as a combination of breath-first and depth-first orders.
However, since we keep track of the depth of sampled vertices to prevent an instance from reaching beyond the desired depth, the sampling result will be the same as if it is done in the breath first order.
}

\subsection{Batched Multi-Instance Sampling}
\label{sect:multi:balance}

In the out-of-memory setting, {\gsag} introduces {\em batched multi-instance sampling}, which \textit{concurrently} samples multiple instances, to combat the expensive data transferring cost.

Batched sampling is implemented by combining the active vertices of various concurrently sampling instances into a single frontier queue for each partition.
Along with the queue, we need to keep two extra metadata for each vertex, i.e., $InstanceID$ \new{and $CurrDepth$, which tracks the instance that a vertex belongs to and stores the current depth of that instance respectively.}
During sampling, a thread warp in the kernel can work on any vertex in the queue, no matter whether they are from the same or different instances.
After it finishes selecting vertices (line 6 in Fig. \ref{fig:api}(b)), $InstanceID$ is used to find the corresponding frontier pool and sampled graph to update (line 7-8).
Note that there may exist multiple copies of the same vertex in the queue, because a common vertex can be sampled by multiple instances.

\begin{figure*}[t]
    \centering
    \subfloat[{\gsag} vs. KnightKing on biased random walk.]{
        \includegraphics[width=.495\linewidth]{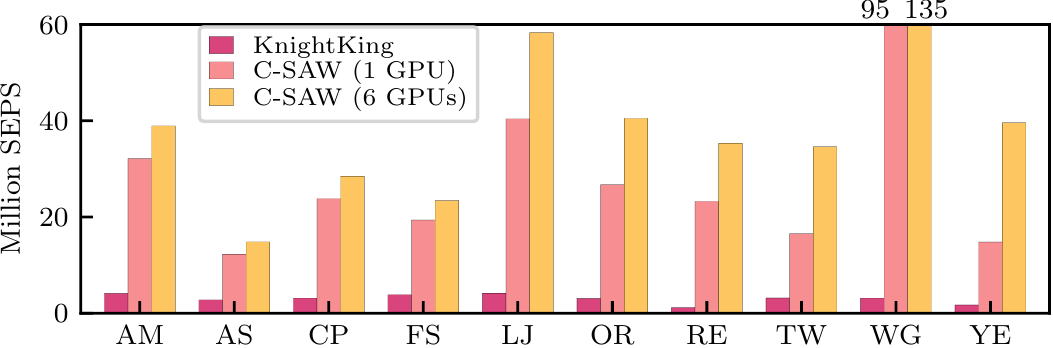}
    }%
    \subfloat[{\gsag} vs. GraphSAINT on multi-dimensional random walk.]{
        \includegraphics[width=.475\linewidth]{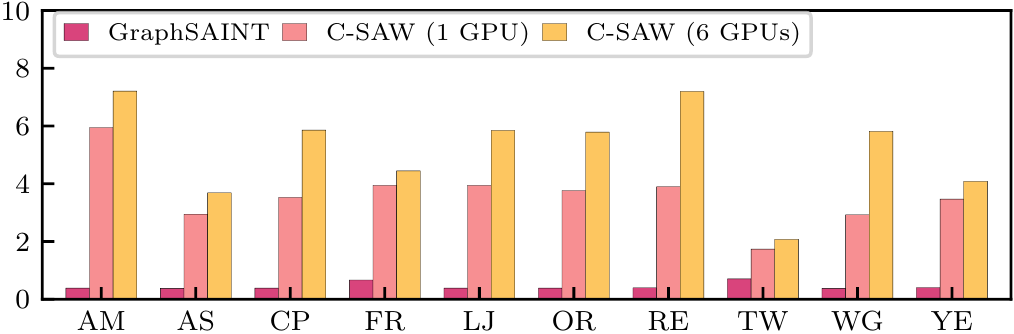}
    }
    \caption{\new{{\gsag} vs. the state-of-the-art in million sampled edges per second with 1 GPU and 6 GPUs (higher is better).} 
    }
    \label{fig:stateofart}
\end{figure*}

Batched sampling can also balance the workload across sampling instances. Otherwise, if we sample various instances separately,
since many real-world graphs hold highly skewed degree distributions, some instances may encounter higher degree vertices more often and thus more workloads. This will end up with skewed workload distributions.
Batched sampling solves this problem using a vertex-grained workload distribution, instead of instance-grained distribution.

\new{
\subsection{Multi-GPU {\gsag}}
As the number of sources continues to grow, the workload will saturate one GPU and go beyond. In this context, scaling {\gsag} to multiple GPUs would help accelerate the sampling performance. 
Since various sampling instances are independent from each other, {\gsag} simply divides all the sampling instances into several disjoint groups, each of which contains equal number of instances. Here, the number of disjoint groups is the same as the number of GPUs. Afterwards, each GPU will be responsible for one sampling group. During sampling, each GPU will perform the same tasks as shown in Fig.~\ref{fig:streaming} and no inter-GPU communication is required. 
}

\vspace{0.1in} 
\section{Evaluations}
\label{sect:experiment}

{\gsag} is {implemented with $\sim$4,000} lines of CUDA code and compiled by CUDA Toolkit 10.1.243 and g++ 7.4.0 
with optimization flag as -O3. 
We evaluate {\gsag} on the Summit supercomputer of Oak Ridge National Laboratory~\cite{ornl_summit}.
Each Summit node is equipped with 6 NVIDIA Tesla V100 GPUs, dual-socket 22-core POWER9 CPUs and 512 GB main memory. Particularly, each V100 GPU is equipped with 16GB device memory. 
For the random number generation, we use the cuRAND library~\cite{tian2009mersenne}.

\begin{table}[!h]
{\scriptsize
\begin{tabular}{|l|l|l|l|l|l|}
\hline
Dataset & Abbr. & \begin{tabular}[c]{@{}l@{}}Vertex \\ Count\end{tabular} & \begin{tabular}[c]{@{}l@{}}Edge \\ Count\end{tabular} & \begin{tabular}[c]{@{}l@{}}Avg.\\ degree\end{tabular} & \new{\begin{tabular}[c]{@{}l@{}}Size\\ (of CSR)\end{tabular}} \\ \hline
Amazon0601~\cite{snapnets}       & AM     & 0.4M   & 3.4M       & 8.39    &\new{59 MB} \\ \hline
As-skitter \cite{snapnets}       & AS      & 1.7M   & 11.1M    & 6.54   &\new{325 MB}\\ \hline
cit-Patents \cite{snapnets}      & CP     & 3.8M    & 16.5M   & 4.38   &\new{293 MB}\\ \hline
LiveJournal \cite{snapnets}     & LJ      & 4.8M   & 68.9M    & 14.23  & \new{1.1 GB}\\ \hline
Orkut \cite{snapnets}            & OR      & 3.1M   & 117.2M   & 38.14  & \new{1.8 GB}\\ \hline
Reddit \cite{zeng2019graphsaint,yelpreddit}           &RE       &0.2M        &11.6M     & 49.82 & \new{179 MB}\\ \hline
web-Google \cite{snapnets}       & WG      & 0.8M      & 5.1M    & 5.83  & \new{85 MB}\\ \hline
Yelp \cite{zeng2019graphsaint,yelpreddit}           &YE         &0.7M        &6.9M      & 9.73& \new{111 MB}\\ \hline \hline
Friendster \cite{snapnets}       & FR      & 65.6M   & 1.8M  & 27.53      &\new{29 GB} \\ \hline
Twitter \cite{konect:2017:twitter}          & TW      & 41.6M   & 1.5M  & 35.25  &\new{22 GB}\\ \hline
\end{tabular}
}
\caption{Details of evaluated graphs.}\label{Table-datasets} 
\end{table}

\textbf{Dataset.}
We use the graph datasets in Table~\ref{Table-datasets} to study {\gsag}. This dataset collection contains a wide range of applications, such as social networks (LJ, OR, FR and TW), forum discussion (RE and YE), online shopping (AM), citation networks (CP), computer routing (AS) and web page (WG).

\textbf{Metrics.}
Instead of Traversed Edges Per Second (TEPS) in classical graph analytics~\cite{liu2015enterprise,wang2016gunrock}, we introduce a new metric - Sampled Edges Per Second (SEPS) - to evaluate the performance of sampling and random walk. Formally, SEPS = $\frac{\#~\text{SampledEdges}}{\text{Time}}$. This metric is more suitable than TEPS to evaluate sampling and random walk because these algorithms might use different methods thus traverse a different number of edges but end up with the same number of sampled edges.
\new{Similar to previous work \cite{wang2016gunrock,liu2015enterprise}, the kernel execution time is used to compute SEPS, i.e., the time spent on generating the samples}, except for the out-of-memory case that also includes the time for transferring the partitions. Note, each reported result is an average of three runs with different sets of seeds.

\textbf{Test Setup.} 
Analogous to GraphSAINT~\cite{zeng2019graphsaint}, we generate \new{4,000} \new{instances} for random walk algorithms and 2000 instances for sampling algorithms.
For sampling, both the \new{\textit{NeighborSize}} (i.e., number of neighbors sampled from one frontier) and $Depth$ are 2 for analyzing the performance of {\gsag} except forest fire, which uses $P_{f}$= 0.7 to derive \new{\textit{NeighborSize}} as in~\cite{leskovec2006sampling}. For \new{biased random walk algorithm}, the length of the walk is \new{2,000}. \new{For multi-dimensional random walk, similar to GraphSAINT, we use \new{2,000} as the \textit{FrontierSize} for each instance.}

\subsection{{\gsag} vs. State-of-the-art}
\label{sect:experiment:in-mem}
\vspace{-0.03in}
First, we compare {\gsag} against the state-of-the-art frameworks, KnightKing and GraphSAINT. \new{Our profiling result shows that both GraphSAINT and KnightKing use multiple threads to perform the computation, where the \# threads = \# cores.}
Since KnightKing only supports random walk variations, we compare {\gsag} with KnightKing for \new{biased random walk}.
\new{GraphSAINT provides both Python and C++ implementations. We choose the C++ implementation~\cite{zeng2019accurate}
which exhibits better performance. \cready{The} C++ version only supports multi-dimensional random walk which is studied in Fig.~\ref{fig:stateofart}(b)}. 

As shown in Fig.~\ref{fig:stateofart}, {\gsag} presents superior performance over both projects. On average, {\gsag} is\new{ {10}$\times$ and {14.7}$\times$} faster than KnightKing with 1 GPU and 6 GPUs, respectively. Compared to GraphSAINT, {\gsag} is \new{{8.1}$\times$ and {11.5}$\times$ faster with 1 GPU and 6 GPUs respectively. Each instance of sampled graphs has 1,703 edges on average.} 
While {\gsag} outperforms both projects across all graphs, we generally observe better speedup on graphs with a lower average degree, such as, AM, CP and WG on KnightKing and AM on GraphSAINT. This is rooted from \new{1) the superior computing capability of GPU over CPU}, 2) {\gsag} is free of \cready{bulk synchronous parallelism (BSP)}~\cite{malewicz2010pregel}, which allows it to always have adequate computing tasks for sparse graphs, and 3) the unprecedented bandwidth of the V100 GPU over the {POWER9} CPU, i.e., 900 GB/s vs. {170 GB/s} \cite{ornl_summit}.
This underscores the need of GPU-based sampling and random walk.

\subsection{In-memory Optimization}

\begin{figure}[ht]
    \centering    
    \subfloat[\new{Biased neighbor sampling.}]{ 
        \hspace{-.05in}\includegraphics[width=.50\linewidth]{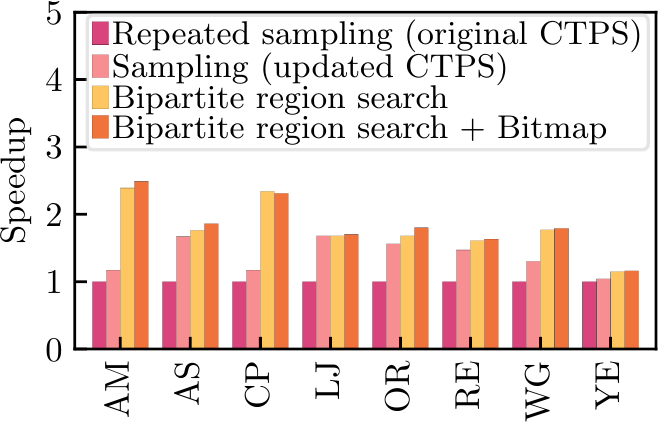}
    }
        \subfloat[\new{Forest fire sampling.}]{
        \hspace{-.05in}\includegraphics[width=.47\linewidth]{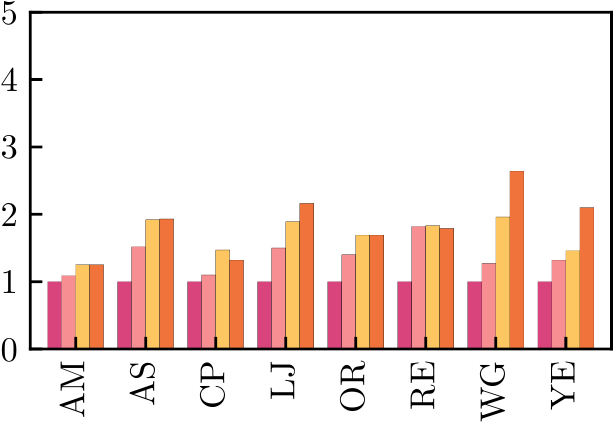}
    }\\
        \subfloat[\new{Layer sampling.}]{
        \hspace{-.05in}\includegraphics[width=.50\linewidth]{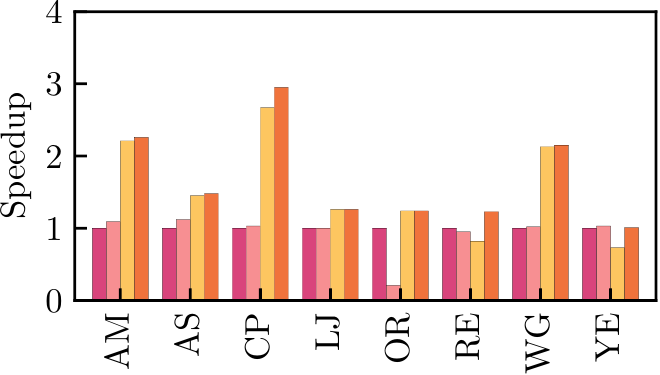}
    }
    \subfloat[\new{Unbiased neighbor sampling.}]{
        \hspace{-.05in}\includegraphics[width=.47\linewidth]{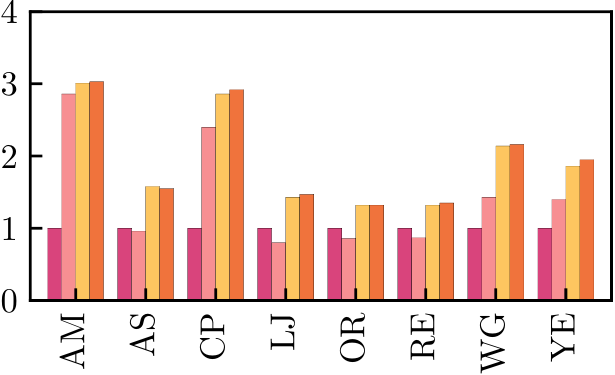}
    }
    \caption{\new{Performance impacts of in-memory optimizations for various sampling algorithms. \vspace{-0.1in}
    } 
    %
    }
    \label{fig:in_memory}
\end{figure}

\begin{figure}[ht]
    \centering
    \subfloat[\new{Biased neighbor sampling.}]{
        \hspace{-.01in}\includegraphics[width=.49\linewidth]{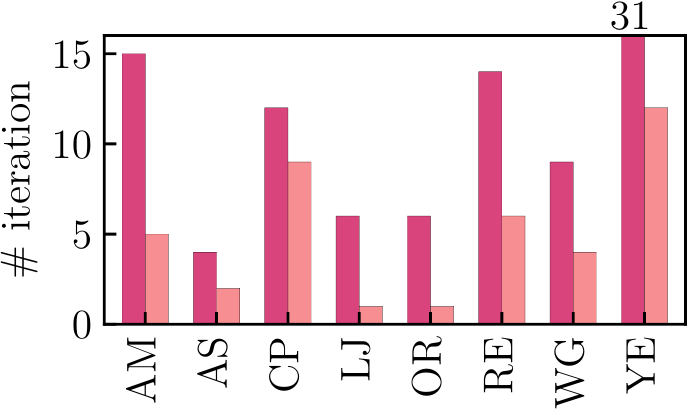}
    }
    \subfloat[\new{Forest fire sampling.}]{
        \hspace{-.01in}\includegraphics[width=.46\linewidth]{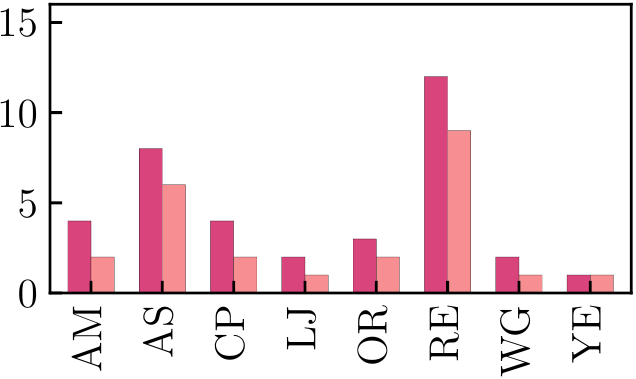}
    }
    \\ 
    \subfloat[\new{Layer sampling.}]{
        \hspace{-.01in}\includegraphics[width=.49\linewidth]{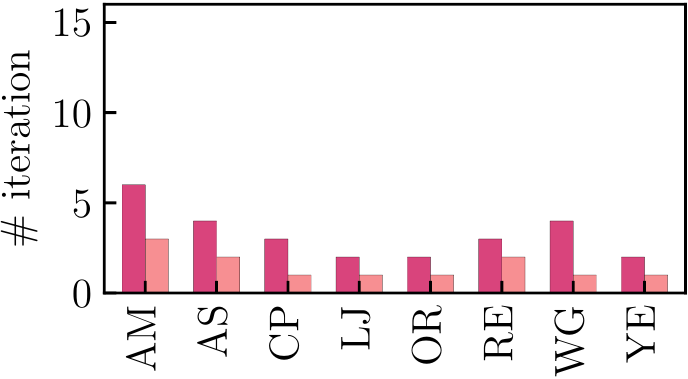}
    }
    \subfloat[\new{Unbiased neighbor sampling.}]{
        \hspace{-.01in}\includegraphics[width=.46\linewidth]{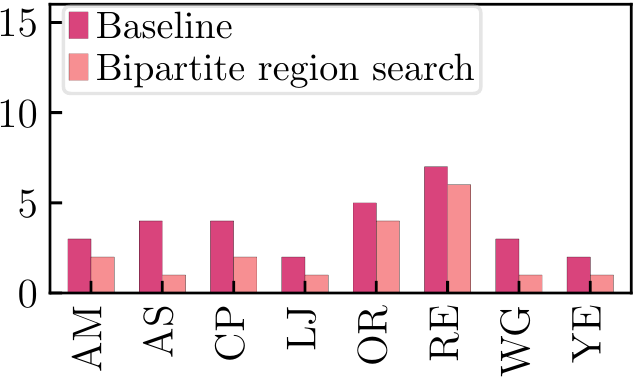}
    }
    \caption{Average \# iteration w/ and w/o \new{bipartite region search} for various algorithms.
    \vspace{-0.1in}
    }
    \label{fig:profile_bipartite}
\end{figure}

\begin{figure}[t]
    \centering
    
\subfloat[\new{Biased neighbor sampling.}]{
        \hspace{-.05in}\includegraphics[width=.50\linewidth]{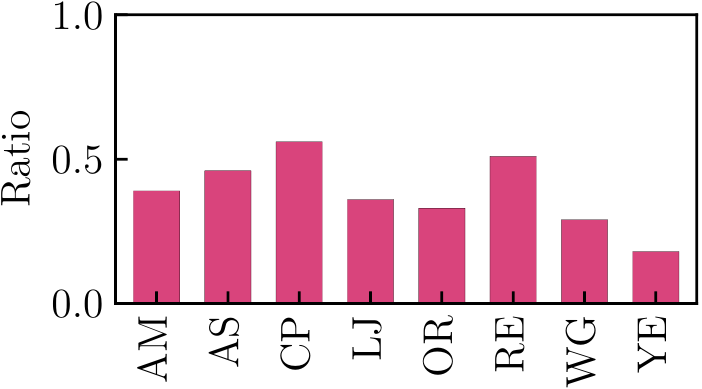}
    }
    \subfloat[\new{Forest fire sampling.}]{
        \hspace{-.05in}\includegraphics[width=.47\linewidth]{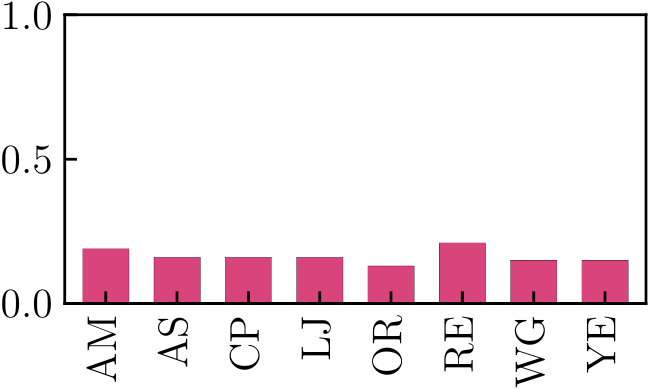}
    }
    \\ 
    \subfloat[\new{Layer sampling.} ]{
        \hspace{-.05in}\includegraphics[width=.50\linewidth]{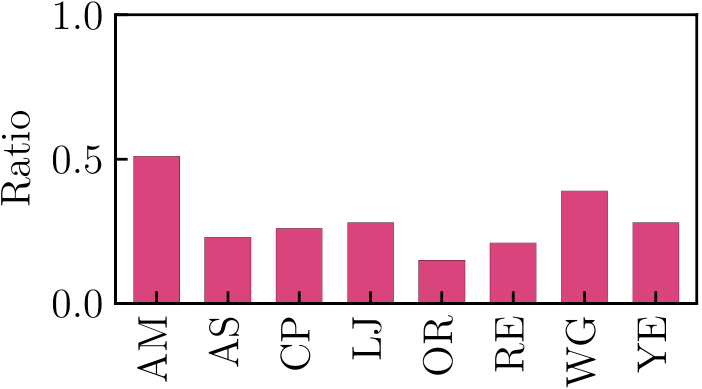}
    }
    \subfloat[\new{Uniased neighbor sampling.}]{
        \hspace{-.05in}\includegraphics[width=.47\linewidth]{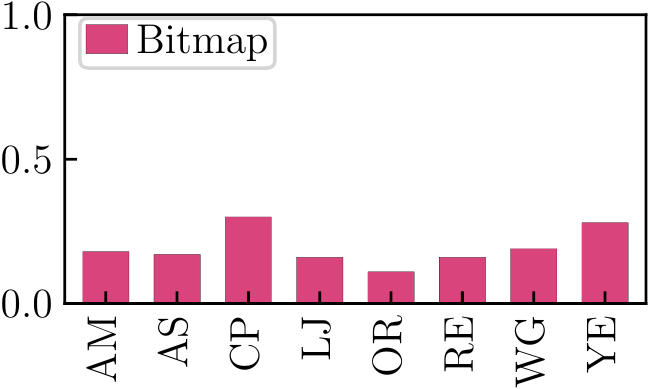}
    }
    \caption{\new{Total search reduction by bitmap for various algorithms.}
    }
    \label{fig:profile_bitmap}
\end{figure}

Fig.~\ref{fig:in_memory} studies the performance impacts of \new{bipartite region search} and bitmap optimizations over repeated sampling  (Fig.~\ref{fig:brs}(a)) and updated sampling (Fig.~\ref{fig:brs}(b)) across four applications, which include both biased and unbiased algorithms.
Repeated sampling is used as the performance baseline for comparison.
\new{FR and TW are not studied in this subsection because they exceed the GPU memory capacity.}
Particularly, \new{bipartite region search} introduces, on average, {1.7}$\times$, {1.4}$\times$, {1.7}$\times$ and {1.17}$\times$ speedup, on \new{biased neighbor sampling, forest fire sampling, layer sampling, and unbiased neighbor sampling} respectively. \new{Bipartite region search} presents better performance compared with both repeated sampling and updated sampling.
Bitmap further improves speedup to {1.8}$\times$, {1.5}$\times$, {1.8}$\times$, and {1.28}$\times$ on these four applications, respectively. The performance for AM, CP, and WG gleams the effectiveness of {\gsag}. With a lower average degree of vertices, they suffer from more selection collision. Using \new{bipartite region search}, we achieve better speedup by mitigating the collision. 

Fig.~\ref{fig:profile_bipartite} and~\ref{fig:profile_bitmap} further profile the effectiveness of our two optimizations. On average, \new{bipartite region search} reduces the average number of iterations to pick a neighbor by {5.0}$\times$, {1.5}$\times$, {1.8}$\times$, and {1.7}$\times$ for these four applications, respectively.
\new{
Here, \#~iterations refers to the trip count of do-while loop in Fig.~\ref{fig:select} (line 10-14), which represents the amount of computation used to select a vertex.
For analysis, we compare the average number of iterations for all sampled vertices, i.e., $\frac{Total~\#~\text{iterations of sampled vertices}}{\#~\text{sampled vertices}}$.}
We observe more \new{reduction on \#~iterations} for \new{biased neighbor sampling} than other algorithms as it has a higher selection collision chance and thus requires more iterations without \new{bipartite region search}.
With relatively larger neighbor pools, collision is less likely to happen in \new{layer sampling} which explains its lower benefits from bipartite region search.
Similarly, \new{unbiased neighbor sampling} and \new{forest fire sampling} incur less collision due to unbiased sampling.
Fig.~\ref{fig:profile_bitmap} shows the effectiveness of bitmap over the baseline which stores the sampled vertices in the GPU shared memory and performs a linear search to detect collision. \new{The ratio metric in Fig.~\ref{fig:profile_bitmap} compares the total number of searches performed by bitmap with that of baseline, i.e., $\text{Ratio} = \frac{\sum\#~\text{searches in bitmap}}{\sum\#~\text{searches in baseline}}$.} Compared to baseline, bitmap reduces the total searches by 
{{63}\%, {83}\%, {71}\%, and {81}\%} for these four applications, respectively. 
Despite of the significant search count reduction from bitmap, \cready{the} overhead of atomic operations refrains us from achieving speedups proportional with the search count reduction.

\subsection{Out-of-memory Optimization}

\begin{figure}[t]
 \centering

    \subfloat[\new{Biased neighbor sampling.} ]{
        \hspace{-.05in}\includegraphics[width=.50\linewidth]{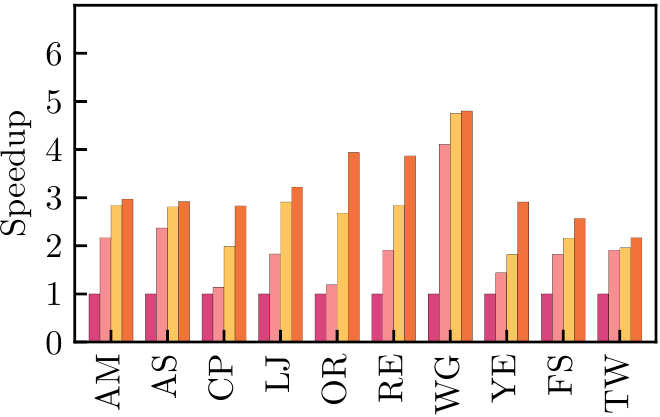}
    }
    \subfloat[\new{Biased random walk.} ]{
        \hspace{-.05in}\includegraphics[width=.47\linewidth]{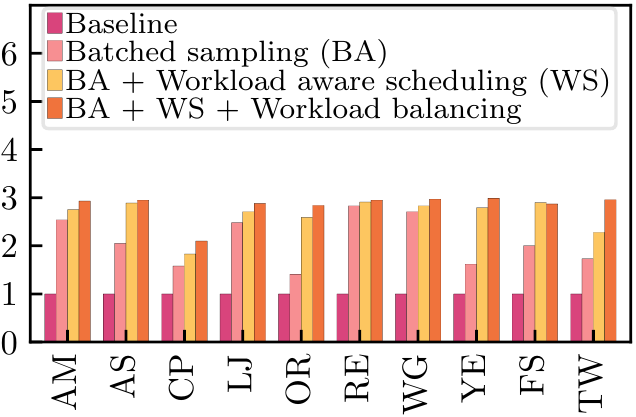}
    }\\ 
    \subfloat[\new{Forest fire sampling.} ]{
        \hspace{-.05in}\includegraphics[width=.50\linewidth]{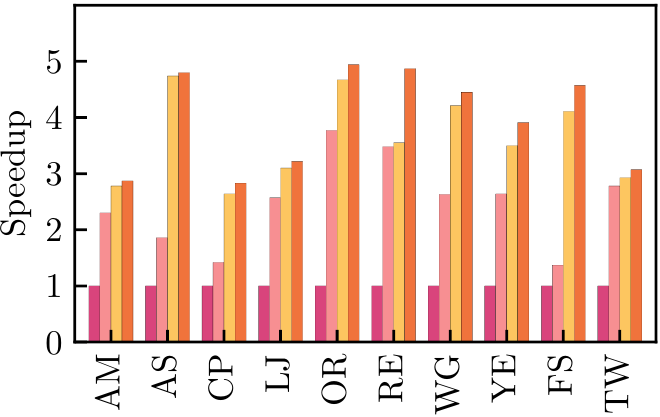}
    }
    \subfloat[\new{Unbiased neighbor sampling.} ]{
        \hspace{-.05in}\includegraphics[width=.47\linewidth]{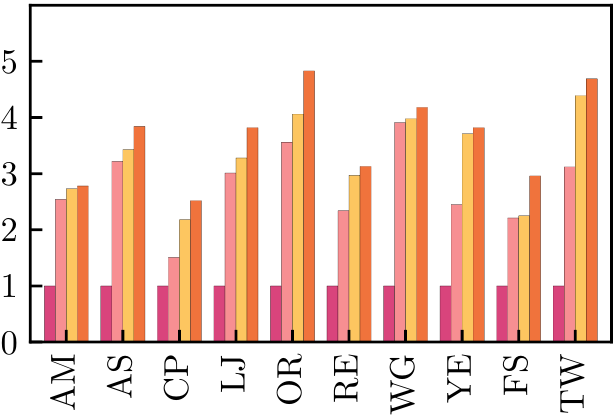}
    }
    \caption{Performance impacts of out-of-memory optimizations. Here,\new{ baseline implementation refers to partition transfer based on active partition without any optimization}.
    \vspace{-0.15in}
    }
    \label{fig:out_memory}
\end{figure}

\begin{figure}[t]

    \centering
        \subfloat[\new{Biased neighbor sampling.}]{
        \hspace{-.05in}\includegraphics[width=.50\linewidth]{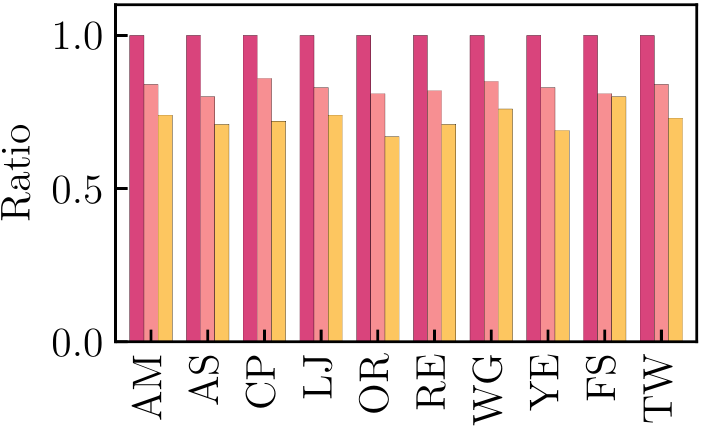}
    }
        \subfloat[\new{Biased random walk.}]{
        \hspace{-.05in}\includegraphics[width=.47\linewidth]{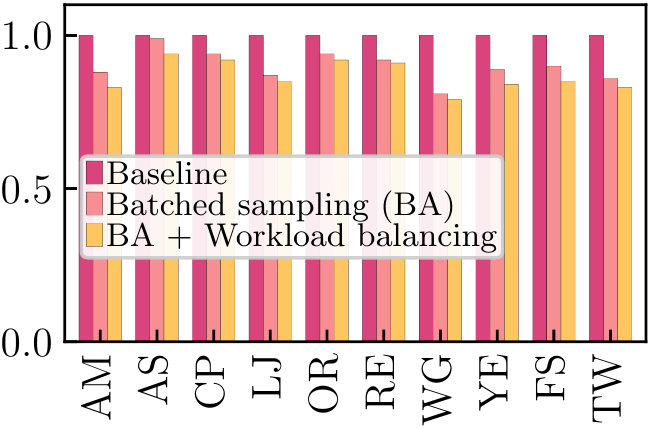}
    } \\
    \subfloat[\new{Forest fire sampling.} ]{
        \hspace{-.05in}\includegraphics[width=.50\linewidth]{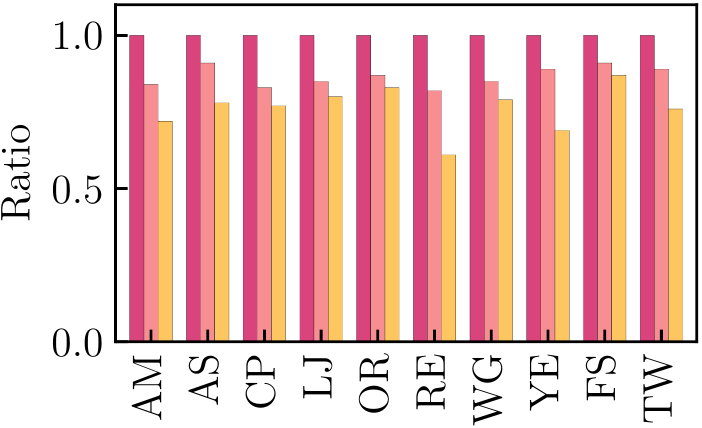}
    }
    \subfloat[\new{Unbiased neighbor sampling.}]{
        \hspace{-.05in}\includegraphics[width=.47\linewidth]{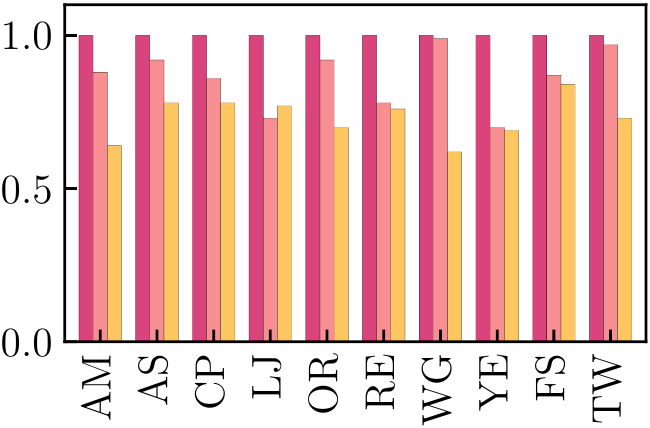}
    }
    \vspace{-.05in}
    \caption{Standard deviation of kernel time for multi-instance batching and workload-aware balancing in out-of-memory {\gsag} \new{(lower is better). Here, baseline represents even distribution of resources. \vspace{-0.15in}}
    }
    \label{fig:profile_batched}
\end{figure}

Fig.~\ref{fig:out_memory} presents the performance impacts of \new{multi-instance} batched sampling (BA), workload-aware scheduling (WS), and thread block based workload balancing (BAL) \new{on both large graphs and small graphs. For the sake of analysis, we pretend small graphs do not fit in GPU memory.} For the experimental analysis, we use 4 partitions for each graph and two CUDA streams.   
Assume the GPU memory can keep at most two partitions at the same time, for all graphs.
Particularly, batched sampling introduces, on average, {2.0}$\times$, {1.9}$\times$, {2.1}$\times$, and {2.7}$\times$ speedup, respectively on \new{biased neighbor sampling, biased random walk, forest fire sampling, and unbiased neighbor sampling}. Workload-aware scheduling further introduces {3.2}$\times$, {2.8}$\times$, {3.9}$\times$, and {3.3}$\times$ speedups on these four applications, respectively. Workload balancing gives, on average, {3.5}$\times$ speedup over all applications.

\begin{figure}[ht]

    \centering
    \subfloat[\new{Biased neighbor sampling.}]{
        \hspace{-.05in}\includegraphics[width=.50\linewidth]{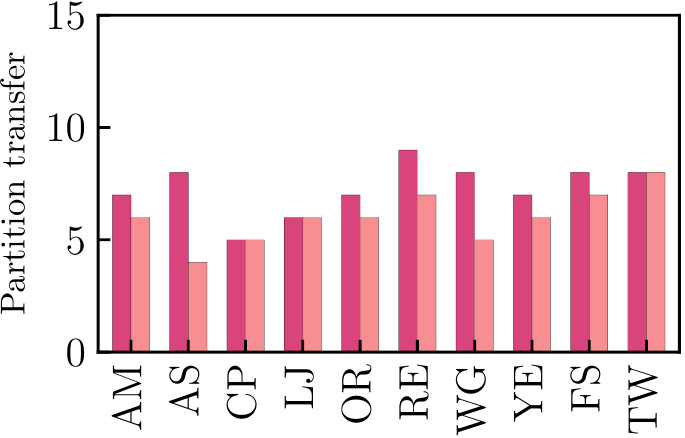}
    }
    \subfloat[\new{Biased random walk.}]{
        \hspace{-.05in}\includegraphics[width=.47\linewidth]{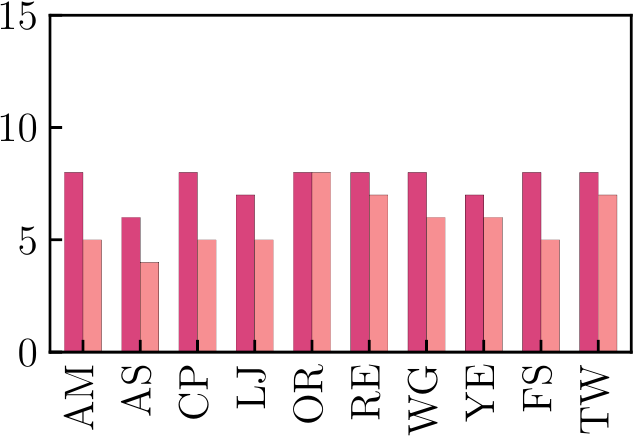}
    }
    \\ 
    \subfloat[\new{Forest fire sampling.}]{
        \hspace{-.05in}\includegraphics[width=.50\linewidth]{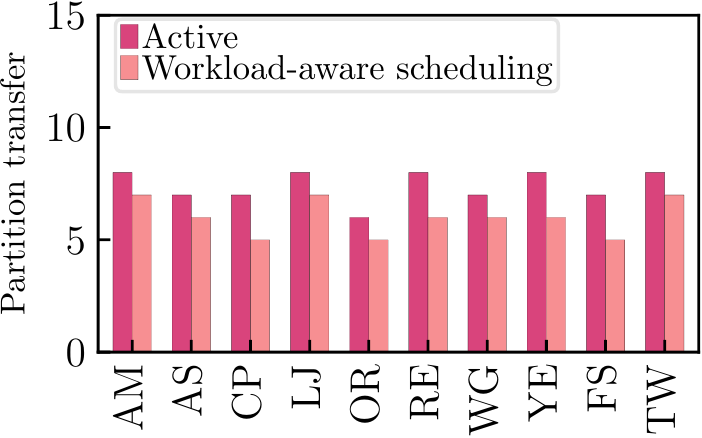}
    }
    \subfloat[\new{Unbiased neighbor sampling.}]{
        \hspace{-.05in}\includegraphics[width=.47\linewidth]{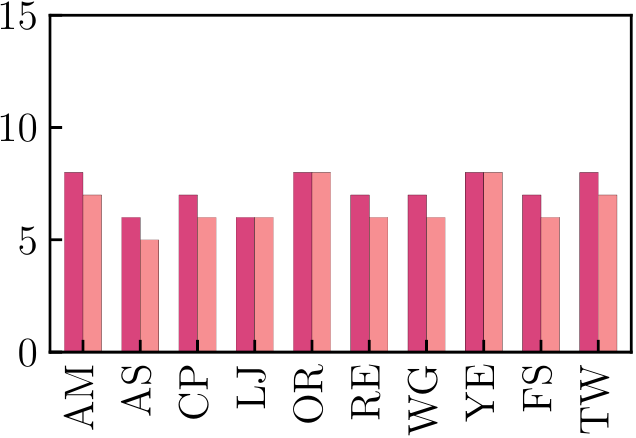}
    }
    \caption{Partition transfer counts for workload-aware scheduling \new{(lower is better)}.
    \vspace{-0.1in}
    }
    \label{fig:profile_degree}
\end{figure}

Fig.~\ref{fig:profile_batched} and~\ref{fig:profile_degree} reasons the effectiveness of two optimizations. \new{We use standard deviation to measure workload imbalance in runtime of two kernels for overall sampling.} On average, multi-instance batched sampling (BA) and thread block based workload balancing (BAL) reduce the average kernel time by {27}\%, {12}\%, {23}\%, and {26}\%, respectively on four applications.
As active vertices increase exponentially with depth during sampling, \new{biased neighbor sampling, forest fire sampling, and unbiased neighbor sampling} observe more reduction in kernel time than \new{biased random walk}. Workload-aware scheduling reduces the overall partition transfers by {1.2}$\times$, {1.3}$\times$, {1.2}$\times$, and {1.1}$\times$ on these four applications, respectively. Even with moderate decrease in partition transfers, we still achieve noticeable speedups.

\begin{figure}[ht]
    \vspace{-0.25cm}
    \centering    
    \subfloat[NeighborSize: 1 - 8.]{
        \hspace{-.05in}\includegraphics[width=0.95\linewidth]{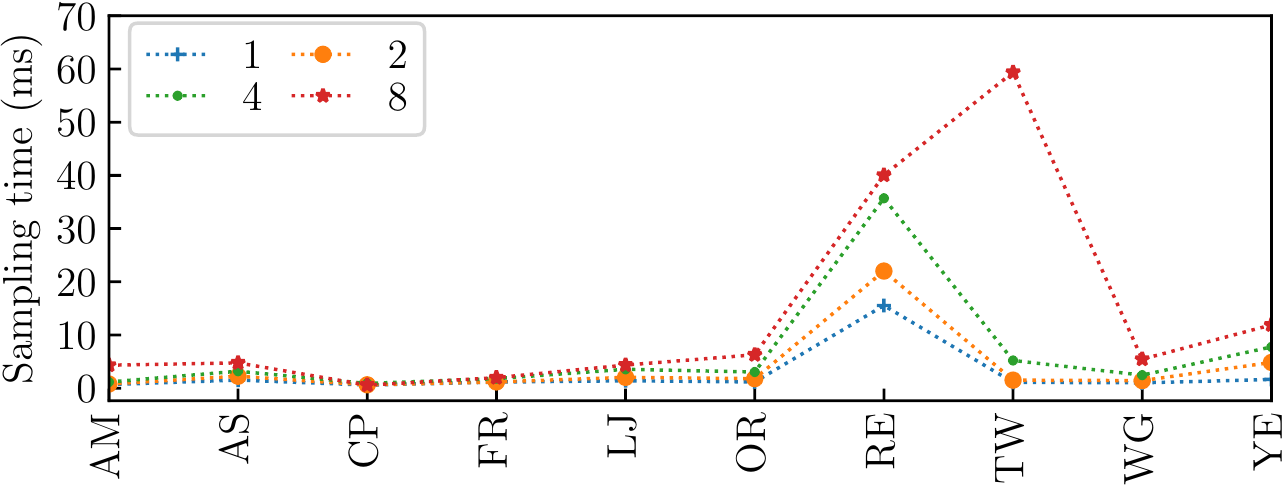}
    }
    \\
        \subfloat[\# instances: 2k - 16k.]{
        \hspace{-.05in}\includegraphics[width=0.95\linewidth]{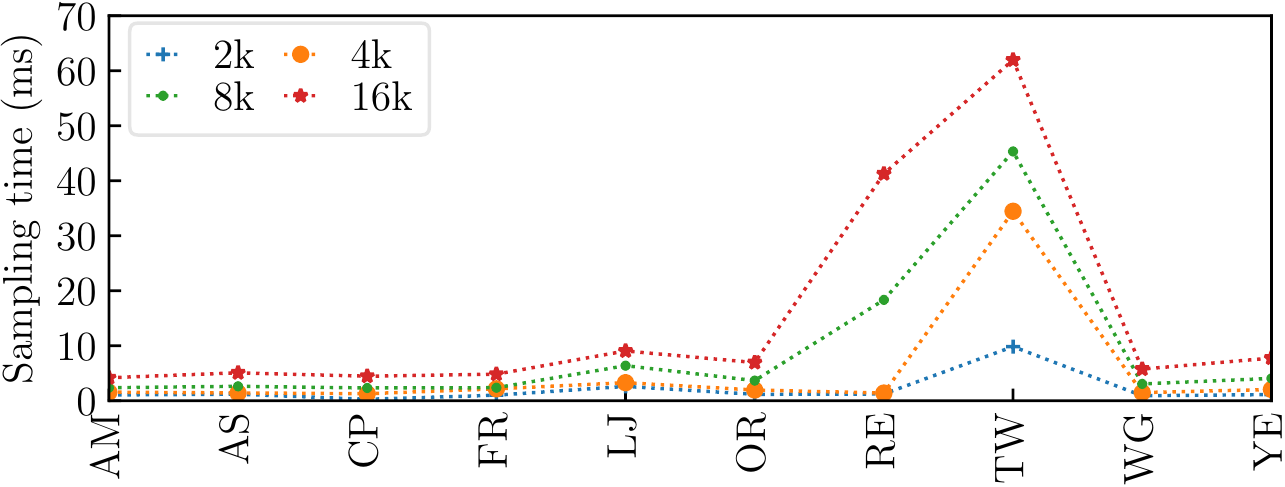}
    }
    \vspace{-0.1in}
    \caption{\new{Biased neighbor sampling with (a) \textit{NeighborSize} as 1, 2 4 and 8 and (b) \# instances as 2k, 4k, 8k and 16k.\vspace{-0.1in}}
    }
    \vspace{-0.05in}
    \label{fig:time}
\end{figure}

\vspace{0.05in}
\subsection{Studying NeighborSize and \#~Instances in {\gsag}}

Fig.~\ref{fig:time} reports the time consumption impacts of various \textit{NeighborSize} and \# instances. Here, we use Depth= 3 and 16k instances in Fig.~\ref{fig:time} (a) for extensive analysis. For Fig.~\ref{fig:time} (b), we use \textit{NeighborSize} = 8.
As shown in Fig.~\ref{fig:time}(a), larger \textit{NeighborSize} leads to longer sampling time. The average sampling time for \textit{NeighborSize} of 1, 2, 4, and 8 are 3, 4, 7, and 14 ms. 
Similarly, the increase of sampling instances, as shown in Fig.~\ref{fig:time}(b) also results in longer sampling time. The average sampling time for 2k, 4k, 8k, and 16k instances is 2, 5, 9, and 15 ms. 
It is important to note that graphs with higher average degrees, i.e., TW, RE, and OR, have longer sampling time, while the impact of graph sizes on sampling time is secondary.

\begin{figure}[t]
    \centering    
    \subfloat[2,000 instances.]{
        \hspace{-.05in}\includegraphics[width=0.9\linewidth]{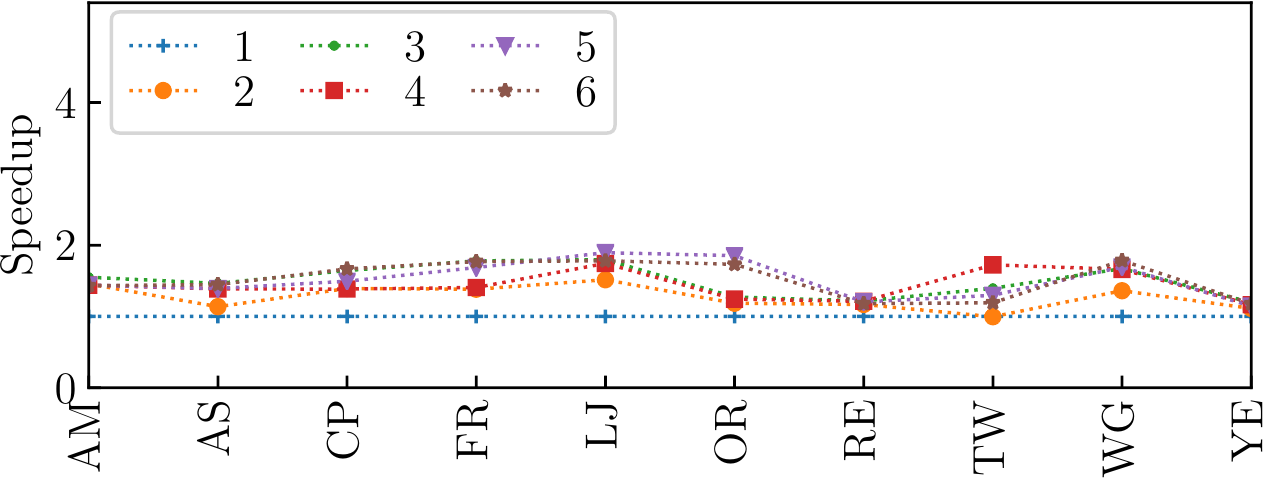}
    }
    \\
        \subfloat[8,000 instances.]{
        \hspace{-.05in}\includegraphics[width=0.9\linewidth]{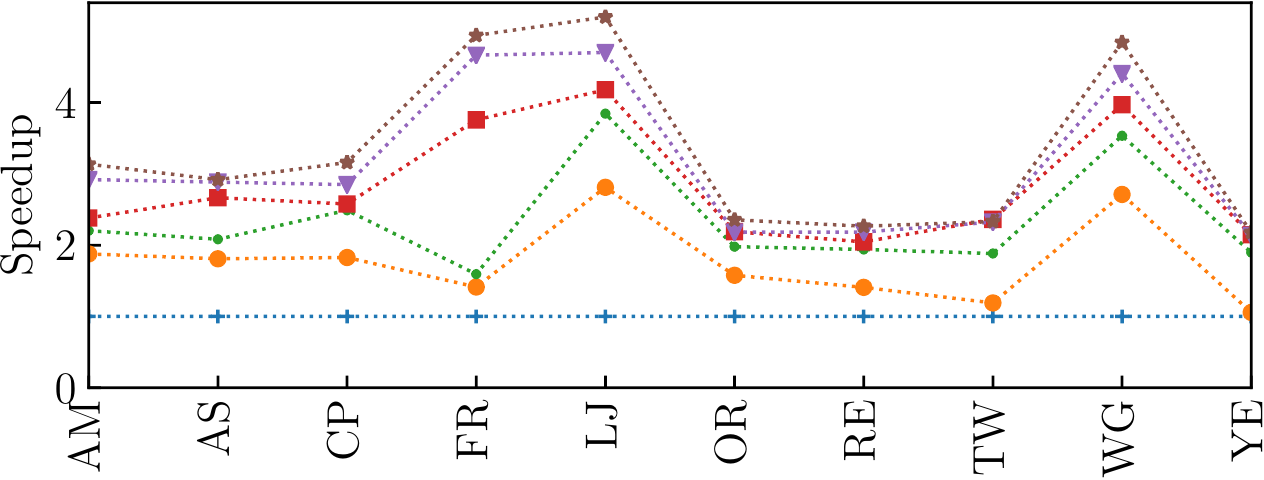}
    }
    
    \caption{\new{Scaling {\gsag} from 1 to 6 GPUs with (a) 2,000 and (b) 8,000 instances for biased neighbor sampling.\vspace{-0.15in}}
    }
    \label{fig:scaling}
\end{figure}

\vspace{0.02in}
\subsection{{\gsag} Scalability}
Fig.~\ref{fig:scaling} scales {\gsag} from 1 to 6 GPUs for different number of sampling instances.
For 2,000 and 8,000 instances, we achieve {1.8}$\times$ and {5.2}$\times$ speedup with 6 GPUs, respectively. The reason is that 2,000 instances fail to saturate 6 GPUs. With 8,000 instances, we observe more workloads that lead to better scalability. We also observe that lower average degree graphs present better scalability because their workloads are better distributed across sampling instances.

\section{Related Works}
\label{sect:related}

Despite there is a surge of frameworks for classical graph algorithms including think like a vertex~\cite{low2014graphlab,malewicz2010pregel}, an edge~\cite{roy2013x}, a graph~\cite{tian2013think}, an IO partition~\cite{liu2017graphene}, and Domain Specific Languages~\cite{hong2012green,zhang2018graphit}, among many others~\cite{liu2019simd,bader2006gtgraph,sundaram2015graphmat}, very few projects target graph sampling and random walk which are the focus of {\gsag}.
This section discusses the closely related work from the following three aspects.\vspace{3px} 
 
\textbf{Programming Interface.}
KnightKing~\cite{yang2019knightking} proposes a walker-centric model to support random walk~\cite{ribeiro2010estimating,li2015random}, e.g., Node2vec~\cite{grover2016node2vec,10.1145/3159652.3159706}, Deepwalk~\cite{perozzi2014deepwalk}, and PPR~\cite{ilprints750,lofgren2014fast,lin2019distributed}, and hence fails to accommodate sampling algorithms that are important for graph learning and sparsification~\cite{ribeiro2010estimating,gao2018large,chen2018fastgcn,ying2018graph,hamilton2017inductive,leskovec2006sampling,gaihre2019deanonymizing}. Similarly for~\cite{lakhotia2019parallel,chen2016general} which also only support limited sampling/random walk algorithms.  
GraphSAINT~\cite{zeng2019graphsaint,zeng2019accurate} explores three graph sampling methods, i.e., random vertex and edge sampling, and random walk based sampler, but fails to arrive at a universal framework.
\cite{tariq2017power} supports deletion based sampling algorithms~\cite{krishnamurthy2007sampling}.
But this design is inefficient for large graphs that need to remove most edges.
In this work, {\gsag} offers a bias-centric framework that can support both sampling and random walk algorithms, and hide the GPU programming complexity from end users.

\vspace{0.05in} 
\textbf{Transition Probability Optimizations.} Existing projects often explore the following optimizations, i.e., probability pre-computation and data structure optimization. Particularly, KnightKing~\cite{yang2019knightking} pre-computes the alias table for static transition probability, and resorts to dartboard for the dynamic counterpart which is similar to~\cite{zeng2019accurate}. Interestingly, kPAR~\cite{shi2019realtime} even proposes to pre-compute random walks to expedite the process.
Since large graphs cannot afford to index the probabilities of all vertices,~\cite{lin2019distributed} only pre-computes for hub vertices and further uses hierarchical alias method, i.e., alias tree for \cready{distributed} sampling.
However, not all sampling and random walk algorithms could have deterministic probabilities that support pre-computation.
{\gsag} finds \cready{inverse transform sampling} to be ideal for GPUs, and achieves superior performance over the state-of-the-art even when computing the probability during runtime.

\vspace{0.05in}
\textbf{Out-of-memory Processing.} GPU unified memory and partition-centric are viable method for out-of-memory graph processing. 
Since graph sampling is irregular, unified memory is not a suitable option~\cite{mishra2017um,li2019um}. Besides, partition-centric options~\cite{graphchi,zheng2015flashgraph,liu2017graphene,han2017graphie,chiang2019cluster} load each graph partition from either secondary storage to memory or CPU memory to GPU for processing. Since prior work deals with classical graph algorithms, they need BSP. 
In contrast, {\gsag} takes advantage of the asynchronous nature of sampling to introduce workload-aware scheduling and batched sampling to reduce the data transfer between GPU and CPU.

\section{Conclusion}
\label{sect:conclusion}

This paper introduces {\gsag}, a novel, generic, and optimized GPU graph sampling framework that supports a wide range of sampling and random walk algorithms. Particularly, we introduce novel bias-centric framework, bipartite region search and workload aware out-of-GPU and multi-GPU scheduling for {\gsag}.
Taken together, our evaluation shows that {\gsag} bests the state-of-the-art.

\section*{Acknowledgement}

We thank the anonymous reviewers for their helpful suggestions and feedbacks. This research is supported in part by the National Science Foundation CRII award No. 2000722, the U.S. Department of Energy, Office of Science, Office of Advanced Scientific Computing Research, under Contract No. DE-AC02-05CH11231 at Lawrence Berkeley National Laboratory, and Brookhaven National Laboratory, which is operated and managed for the U.S. Department of Energy Office of Science by Brookhaven Science Associates under contract No. DE-SC0012704.


	{
		\bibliographystyle{unsrt}
		\bibliography{references}
	}

\end{document}